\pgfplotsset{compat=newest}
\newcommand{\mi}{\text{~~~~}}
\newcommand{\mii}{\text{~~~~~~~~}}
\newcommand{\miii}{\text{~~~~~~~~~~~~}}
\newcommand{\miiii}{\text{~~~~~~~~~~~~~~~~}}
\newcommand{\ty}{\texttt}
\newcommand{\sth}{\textsf{th}}
\newcommand{\sloc}{\textsf{loc}}
\newcommand{\srace}{\textsf{race}}
\newcommand{\sracefree}{\textsf{race-free}}
\newcommand{\slock}{\textsf{lock}}
\newcommand{\sunlock}{\textsf{unlock}}
\newcommand{\swr}{\textsf{wr}}
\newcommand{\srd}{\textsf{rd}}
\newcommand{\smem}{\textsf{mem}}
\newcommand{\shbrace}{\textsf{hb-race}}
\newcommand{\saarace}{\textsf{aa-race}}
\newcommand{\shb}{\textsf{hb}}
\newcommand{\sref}{\textsf{ref}}
\newcommand{\sterm}{\textsf{term}}
\newcommand{\seven}{\textsf{even}}
\newcommand{\smatch}{\textsf{match}}
\newcommand{\scheck}{\textsf{check}}
\newcommand{\sfinal}{\textsf{final}}
\newcommand{\ssrc}{\textsf{src}}
\newcommand{\stgt}{\textsf{tgt}}
\newcommand{\sconflict}{\textsf{conflict}}
\newcommand{\ssync}{\textsf{sync}}
\newcommand{\stype}{\textsf{type}}
\newcommand{\up}{\fontshape{n}}
\begin{document}

\pagestyle{headings} % switches on printing of running heads
\mainmatter % start of the contributions

%\title{Specification and Checking of Thread Refinement with Locks}
%\title{Precise Specification and Efficient Checking of Thread Refinement}
\title{Formalizing and Checking Thread Refinement for Data-Race-Free Execution Models\\\normalsize\vspace{2ex}(Extended Version)}

\titlerunning{Thread Refinement} % abbreviated title (for running head)
\author{Daniel Poetzl \and
Daniel Kroening}

\institute{University of Oxford}

\maketitle % typeset the title of the contribution

\setcounter{footnote}{0}

%%%%%%%%%%%%%%%%%%%%%%%%%%%%%%%%%%%%%%%%%%%%%%%%%%%%%%%%%%%%%%%%%%%%%%%%%%%%%%%%

%\begin{abstract}
%When optimizing a thread in a concurrent program (either manually or done by
%the compiler), it must be guaranteed that the optimized thread is a refinement
%of the original thread. Current theories of valid optimizations are formulated
%in terms of transformations on thread execution traces. We show that a theory
%formulated instead in terms of the state of threads at synchronization
%operations provides several advantages: it supports more optimizations, and
%leads to more efficient and simpler procedures for refinement checking. We
%develop our theory for the SC-for-DRF execution model (using locks for
%synchronization), and show that its application in a compiler testing setting leads
%to large performance improvements.
%\end{abstract}

\begin{abstract}
When optimizing a thread in a concurrent program (either done manually or by the
compiler), it must be guaranteed that the resulting thread is a refinement of
the original thread. Most theories of valid optimizations are formulated in
terms of valid syntactic transformations on the program code, or in terms of
valid transformations on thread execution traces. We present a new theory
formulated instead in terms of the state of threads at synchronization
operations, and show that it provides several advantages: it supports more
optimizations, and leads to more efficient and simpler procedures for refinement
checking. We develop the theory for the SC-for-DRF execution model (using locks
for synchronization), and show that its application in a compiler testing
setting leads to large performance improvements.
\end{abstract}

%%%%%%%%%%%%%%%%%%%%%%%%%%%%%%%%%%%%%%%%%%%%%%%%%%%%%%%%%%%%%%%%%%%%%%%%%%%%%%%%

\section{Introduction}
\label{sec:introduction}

The refinement problem between threads appears in various contexts, such as the
modular verification of concurrent programs, the proving of correctness of
compiler optimization passes, or compiler testing.
Informally, a thread $T'$ is a refinement of a thread $T$ if for all possible
concurrent contexts $C = T_0\!\parallel\!\ldots\!\parallel\!T_{n-1}$ (with $\parallel$
denoting parallel composition), the set of final states reachable by
$T'\!\parallel\!C$ is a subset of the set of final states reachable by
$T\!\parallel\!C$. We consider the problem within the frame of code optimization
(either done manually or by an optimizing compiler): the optimized thread must
be a refinement of the original thread.

%However, the direct use of this definition for deciding whether a thread $T'$ is
%a refinement of another thread $T$ is impractical: it would require the enumeration
%of all potential concurrent contexts $C$. We thus want a different specification
%of thread refinement. The two requirements we have are precision and efficiency.

We consider refinement within the ``SC for DRF'' execution model~\cite{adve:1990}, i.\,e. programs behave sequentially
consistent (SC)~\cite{lamport:1979} if their SC executions are free of data races, and programs containing data
races have undefined semantics. A program containing data
races could thus end up in any final state. Synchronization is provided via lock$(l)$
and unlock$(l)$ operations. The model is similar to e.\,g. pthreads with the
various lock operations such as \ty{pthread\_mutex\_lock()} and 
\ty{pthread\_mutex\_unlock()}.
%Moreover, we can also handle recursive locks such
%as provided by Java via \ty{synchronized} statements and methods.

The definition of refinement given in the first paragraph is not
directly useful for automated or manual reasoning, as it would require the
enumeration of all possible concurrent contexts $C$.
%We thus develop a new
%theory that is better suited for those tasks. 
We thus develop a new theory that is based on comparing the state of the
original thread and the transformed thread at synchronization operations.
We improve over existing work both
in terms of \emph{precision} and \emph{efficiency}. First, our theory allows to
show refinement in cases where others fail. For example, we also allow the
reordering of shared memory accesses out of critical sections (under certain
circumstances); a transformation that is unsupported by other theories.
Second, we show that applying our new specification method in a compiler testing
setting leads to large performance gains. We can check whether two thread
execution traces match significantly faster than a previous approach of Morisset
et al.~\cite{morisset:2013}.

%We improve over existing work both in terms of \emph{precision} and
%\emph{efficiency}. In addition to all optimizations allowed by current theories,
%we alos allow the reordering of shared memory accesses out of critical sections
%(under certain circumstances). Moreover, checking trace refinement is a \emph{linear} operation in our approach.
%Preliminary experiments (see Sec.~\ref{sec:experiments}) show that our trace checking method outperforms the
%one of Morisset~et~al.~\cite{morisset:2013} by several orders of magnitude.

The rest of the paper is organized as follows. Section~\ref{sec:vs} introduces
our state-based refinement formulation and compares it to previous event-based
approaches on a concrete example.
Section~\ref{sec:formalization} formalizes state-based refinement. Section~\ref{sec:optimizations} shows that our
formulation is more precise in that it supports more compiler optimizations than
current theories. Section~\ref{sec:nested_locks} adapts the formalization to
also handle nested locks.
Section~\ref{sec:evaluation} evaluates our theory in the context of a compiler testing
application that involves checking thread execution traces.
Section~\ref{sec:related} surveys related work. Section~\ref{sec:conclusions} concludes.

% and suggests future
%research directions.

%%%%%%%%%%%%%%%%%%%%%%%%%%%%%%%%%%%%%%%%%%%%%%%%%%%%%%%%%%%%%%%%%%%%%%%%%%%%%%%%

\section{State-Based and Event-Based Refinement}
\label{sec:vs}

Current theories of refinement for language level memory models (such
as the Java Memory Model or SC-for-DRF) are phrased in terms of transformations
on thread execution traces (see e.\,g. \cite{manson:2005,sevcik:2008,boehm:2007,morisset:2013,sevcik:2011}).
The trace transformations are then lifted to transformations on the program code.
%Current theories of allowed optimizations (e.\,g. \cite{morisset:2013,sevcik:2011,boehm:2007}) are not phrased directly in terms of
%transformations on the program code, but instead in terms of transformations on
%thread traces.
Thread traces are sequences of memory events (reads or writes)
and synchronization events (lock or unlock).
The valid transformations are given as descriptions of which \emph{reorderings},
\emph{eliminations}, and \emph{introductions} of memory events on a trace are allowed.
Checking whether a trace $t'$ is a correctly transformed version of a trace $t$
then amounts to determining whether there is a sequence of valid
transformations that turns trace $t$ into trace $t'$. If each trace $t'$ of $T'$
is a transformed version of a trace $t$ of $T$, it follows that
$T'$ is a refinement of $T$.

%An approach to test compilers by comparing traces of an optimized and an original
%thread was implemented by Morisset~et~al.~\cite{morisset:2013} in the ${\tt cmmtest}$
%tool. They note that a major source of inefficieny in their tool is the search for a
%valid sequence of transformations that turns a trace into another.

We show that instead of describing refinement via a sequence of valid
transformations on traces, switching to a theory based on states provides several
benefits. In essence, in the state-based approach, we only require that traces
$t'$ and $t$ are in the same state at corresponding synchronization operations,
and that $t'$ does not allow for more data races than $t$. In the next
section, we illustrate the difference between the two approaches on an example.

\subsection{Example}

Consider Figure~\ref{fig:traces}, which shows an original thread $T$, a
(correctly) transformed version $T'$, and a concurrent context
$C$ in the form of another thread. The threads access shared variables $x, y, z$
and local variables $a, b$. The context $C$ outputs the value of variable $z$ in
the final state.
By inspecting $T'\!\parallel\!C$ and $T\!\parallel\!C$ (assuming initial state
$\{x\colon 0, y\colon 0, z\colon 0\}$), we see that both combinations produce
the same possible outputs ($0$ or $2$). In fact, $T'$ and $T$ exhibit the same
behavior in any concurrent context $C$ for which $T\!\parallel\!C$ is data-race-free.
%We assume for now that the threads are executed in a context $C$ that only reads
%the shared variables, but does not write to them. 
%We assume the threads
%start in the state $\{x\colon 0, y\colon 0, z\colon 0\}$.

Now let us look at two traces $t'$ of $T'$ and $t$ of $T$, and how an
event-based and our state-based theory would establish refinement. We assume for
now that $T$ and $T'$ are only composed with contexts that do not write any
shared memory locations accessed by them (as is the case for e.\,g. the context
shown in Figure~\ref{subfig:context}).
Figure~\ref{fig:trace_matching} shows the execution traces of $T$ (left trace)
and $T'$ (right trace) for initial state $\{x\colon 0, y\colon 0, z\colon 0\}$.

%\subsubsection{Event-Based Refinement}

A theory based on trace transformations (see Figure~\ref{subfig:eventbased})
would establish the refinement between the two traces by noting that \ty{write x 2} can
be removed (``overwritten write elimination''), \ty{read x 2} and \ty{read y 0}
can be reordered (``non-conflicting read reordering''), and \ty{read y 0} can be
introduced (``irrelevant read introduction''). It can become significantly more
complicated if longer traces and more optimizations are considered.

%and R$y2$
%can be moved across the lock operation (``roach motel reordering'').

%The execution traces of the threads are shown in Figure~\ref{fig:trace_matching}.
%A theory based on trace transformations (see Figure~\ref{subfig:eventbased})
%would establish the refinement between the two traces by noting that W$y1$ can
%be removed (``overwritten write elimination''), and R$y2$
%can be moved across the lock operation (``roach motel reordering'').
%It can become significantly more complicated if longer traces and more
%optimizations are considered.

%%%%%%%%%%%%%%%%%%%%%%%%%%%%%%%%%%%%%%%%%%%%%%%%%%%%%%%%%%%%%%%%%%%%%%%%%%%%%%%

\newsavebox{\context}
\begin{lrbox}{\context}
\begin{lstlisting}[frame=none]
void context() {
  int a;
  lock(l);
  a = x;
  z = a;
  unlock(l);
  join(thread_{orig|
               trans});
  printf("%d\n", z);
}
\end{lstlisting}%
\end{lrbox}

\newsavebox{\original}
\begin{lrbox}{\original}
\begin{lstlisting}[frame=none]
void thread_orig() {
  int a, b;
  lock(l);
  x = 1;
  x = 2;
  unlock(l);
  a = x;
  b = y;
  lock(l);
  if (b == 0)
    x = 0;
  unlock(l);
}
\end{lstlisting}
\end{lrbox}

\newsavebox{\transformed}
\begin{lrbox}{\transformed}
\begin{lstlisting}[frame=none]
void thread_trans() {
  int a, b;
  lock(l);
  x = 2;
  unlock(l);
  b = y;
  a = x;
  lock(l);
  if (b == 0)
    x = 0;
  b = y;
  unlock(l);
}
\end{lstlisting}
\end{lrbox}

\begin{figure}[t]
\hspace*{4mm}
\subfloat[Original thread\label{subfig:original}]{\usebox{\original}}%
\hspace*{5mm}
\subfloat[Transformed thread\label{subfig:transformed}]{\usebox{\transformed}}%
\hspace*{5mm}
\subfloat[Context\label{subfig:context}]{\usebox{\context}}
\caption{Original thread $T$, transformed thread $T'$, and concurrent context $C$}
\label{fig:traces}
\end{figure}

%%%%%%%%%%%%%%%%%%%%%%%%%%%%%%%%%%%%%%%%%%%%%%%%%%%%%%%%%%%%%%%%%%%%%%%%%%%%%%%%

%%%%%%%%%%%%%%%%%%%%%%%%%%%%%%%%%%%%%%%%%%%%%%%%%%%%%%%%%%%%%%%%%%%%%%%%%%%%%%%%

\newsavebox{\eventbased}
\begin{lrbox}{\eventbased}
\begin{tikzpicture}

\node[anchor=north west] at (0, 0) {%
\begin{lstlisting}[numbers=none, frame=none]
lock   m
write  x 1
write  x 2
unlock m
read   x 2
read   y 0
lock   m
write  x 0

unlock m
\end{lstlisting}
};

\node[anchor=north west] at (3, 0) {%
\begin{lstlisting}[numbers=none, frame=none]
lock   m
write  x 2

unlock m
read   y 0
read   x 2
lock   m
write  x 0
read   y 0
unlock m
\end{lstlisting}
};

% Reordering arrows
\newcommand{\leftside}{1.9}
\newcommand{\rightside}{3}
\newcommand{\tracestep}{0.35}
\newcommand{\traceshift}{-0.1}
\draw[->, dashed, thick] (\leftside, \traceshift-6*\tracestep) -- (\rightside, \traceshift-5*\tracestep);
\draw[->, dashed, thick] (\leftside, \traceshift-5*\tracestep) -- (\rightside, \traceshift-6*\tracestep);

% Addition node
\node[anchor=north west] at (2.5, -3.2) {\tiny\textbf{(+)}};

% Strikethrough
\newcommand{\sllen}{1.85}
\newcommand{\sly}{-1.08}
\draw[thick] (0, \sly) -- (0+\sllen, \sly);

\end{tikzpicture}
\end{lrbox}

%%%%%%

\newsavebox{\statebased}
\begin{lrbox}{\statebased}
\begin{tikzpicture}

\node[anchor=north west] at (0, 0) {%
\begin{lstlisting}[numbers=none, frame=none]
lock   m
write  x 1
write  x 2
unlock m
read   x 2
read   y 0
lock   m
write  x 0

unlock m
\end{lstlisting}
};

\node[anchor=north west] at (4.5, 0) {%
\begin{lstlisting}[numbers=none, frame=none]
lock   m
write  x 2

unlock m
read   x 2
read   y 0
lock   m
write  x 0
read   y 0
unlock m
\end{lstlisting}
};

% Reordering arrows
\newcommand{\leftside}{1.9}
\newcommand{\rightside}{2.6}
\newcommand{\tracestep}{0.35}
\newcommand{\traceshift}{-0.1}
\draw[->, dashed, thick] (\leftside, \traceshift-3.9*\tracestep) -- (\rightside, \traceshift-3.9*\tracestep);
\draw[->, dashed, thick] (\leftside, \traceshift-10.6*\tracestep) -- (\rightside, \traceshift-10.6*\tracestep);

\renewcommand{\leftside}{4.4}
\renewcommand{\rightside}{3.8}
\renewcommand{\tracestep}{0.35}
\renewcommand{\traceshift}{-0.1}
\draw[->, dashed, thick] (\leftside, \traceshift-3.9*\tracestep) -- (\rightside, \traceshift-3.9*\tracestep);
\draw[->, dashed, thick] (\leftside, \traceshift-10.6*\tracestep) -- (\rightside, \traceshift-10.6*\tracestep);

% Addition node
%\node[anchor=north west] at (0, 0) {(+)};

% Strikethrough
%\newcommand{\sllen}{1.85}
%\newcommand{\sly}{-0.96}
%\newcommand{\sly}{-1.08}
%\draw[thick] (0, \sly) -- (0+\sllen, \sly);

\node[align=left] at (3.2, -1.5) {%
\scriptsize$\{x = 2,$\\\hspace{1.5mm}\scriptsize$y = 0,$\\\hspace{1.5mm}\scriptsize$z = 0\}$
};

\node[align=left] at (3.2, -3.8) {%
\scriptsize$\{x = 0,$\\\hspace{1.5mm}\scriptsize$y = 0,$\\\hspace{1.5mm}\scriptsize$z = 0\}$
};

\node[align=left] at (2, -5.6) {%
\scriptsize$R_0' \subseteq (A_0 \cup A_1)$\\
\scriptsize$W_0' \subseteq (W_0 \cup W_1)$\vspace{1mm}\\
\scriptsize$R_1' \subseteq A_1$\\
\scriptsize$W_1' \subseteq W_1$
};

\node[align=left] at (4.6, -5.1) {%
\scriptsize$R_2' \subseteq (A_2 \cup A_1)$\\
\scriptsize$W_2' \subseteq (W_2 \cup W_1)$
};

\end{tikzpicture}
\end{lrbox}

\begin{figure}
\subfloat[Event-based matching\label{subfig:eventbased}]{\usebox{\eventbased}}%
\hspace*{10mm}
\begin{minipage}{.6\linewidth}
\vspace*{-9mm}
\subfloat[State-based matching\label{subfig:statebased}]{\usebox{\statebased}}
\end{minipage}
\caption{Trace matching}
\label{fig:trace_matching}
\end{figure}

%%%%%%%%%%%%%%%%%%%%%%%%%%%%%%%%%%%%%%%%%%%%%%%%%%%%%%%%%%%%%%%%%%%%%%%%%%%%%%%%

%\subsubsection{State-Based Refinement}

We specify trace refinement by requiring that $t'$, $t$ are in the
\emph{same state} at corresponding unlock operations, and that $t'$ does not
allow more data races than $t$ (see Figure~\ref{subfig:statebased}).
Indeed, both
traces are in state $\{x\colon 2, y\colon 0, z\colon 0\}$ at the first unlock$(l$), and in state
$\{x\colon 0, y\colon 0, z\colon 0\}$ at the second unlock($l$). The \emph{key reason} for why
trace refinement can be specified this way is that any context $C$ for which
$T\!\parallel\!C$ is data-race-free can for each shared variable only observe
the \emph{last write} to it before an unlock operation. If it could observe any
intermediate write, there necessarily would be a data race.

In addition to requiring that $t'$ and $t$ are in the same state, we also
require that $t'$ does not allow more data races than $t$.
%The second requirement is that the trace of the transformed thread must not allow
%more data races than the trace of the original thread.
This requirement is
specified by the set constraints in Figure~\ref{subfig:statebased}. The primed
sets correspond to $t'$, and the unprimed sets to $t$. The sets $R_i', R_i$
($W_i', W_i$) denote the sets of memory locations read (written) between
subsequent lock operations. For example, $R_1$ denotes the set of memory locations
read by $t$ between the first unlock($l$) and the second lock($l$). We also use
the abbreviations $A_i' = R_i' \cup W_i'$ and $A_i = R_i \cup W_i$. As an
example, the condition $W_0' \subseteq W_0 \cup W_1$ says that any memory
location written by $t'$ between the first lock($l$) and the subsequent
unlock($l$) must also be written by $t$ either between the first lock($l$) and
the subsequent unlock($l$), or between the first unlock($l$) and the subsequent lock($l$).
Since for $x \in W_0'$ we require only that $x \in W_0$ or $x \in W_1$, this
allows a write to move into the critical section in $t'$ compared to $t$. We
will more precisely capture the set constraints in Section~\ref{sec:formalization}.

\subsubsection{Contexts that Write}

We now assume that a thread can be put in an arbitrary context that can also
write to the shared state. Thus, when generating the traces of a thread we also
need to take into account that a read of a variable x could yield a value that
is both different from the initial value of x, and which the thread has not
itself written (i.e. it was written by the context).

In an event-based theory this is typically handled by assuming that reads can
return arbitrary values (see e.g. \cite{morisset:2013}). However, this
assumption is unnecessarily general. For example, if a thread reads the same
variable twice in a row with no intervening lock operation, and it
did not itself write to the variable, then both reads need to return the same
value. Otherwise, this would imply that another thread has written to the
variable and thus there would be a data race.

In fact, when generating the traces of a thread, it is sufficient to assume that
a thread observes the shared state only at its lock($l$) operations. The reason
for this is that lock($l$) operations synchronize with preceding unlock($l$)
operations of other threads. And those threads in turn make their writes
available at their unlock($l$) operations.

\subsubsection{State-Based Refinement}

To summarize, we state the intuitive formulation of our refinement theory. We
will formalize this notion in the next section.

\begin{quote}
We say that thread $T'$ is a refinement of thread $T$ if for each trace $t'$ of
$T'$ there is a trace $t$ of $T$ such that $t'$ and $t$ match.\\

We say two traces $t'$, $t$ match if their states at lock($l$) operations match
(i.\,e. they observe the same values), their states at unlock($l$) operations
match (i.\,e. they write the same values), and the sets of memory locations
accessed by $t'$ are subsets of the corresponding sets of memory locations
accessed by $t$ (i.\,e. $t'$ does not allow more data races than $t$).
\end{quote}

\section{Formalization}
\label{sec:formalization}

We now formalize the ideas from the previous section. We first
make a few simplifying assumptions. Most notably we assume for now that threads do not
contain nested locks. 
In Section~\ref{sec:nested_locks} we later adapt the formalization to also handle
nested locks.
%locks see the extended version of this paper~\cite{poetzl:2015}. 
%
We
assume that lock($l$) and unlock($l$) operations occur alternately on each thread
execution, and that lock($l$) and unlock($l$) operations occur infinitely often on any
infinite thread execution. This implies that a thread cannot get stuck e.\,g. in
an infinite loop without reaching a next lock operation. We also assume that the
first operation in a thread is a lock($l$), and the last \emph{lock} operation in a thread
is an unlock($l$). We assume that the concurrent execution is the only source of
nondeterminism, and that data races are the only source of undefined behavior.

A program $P = T_0\!\parallel\!\ldots\!\parallel\!T_{n-1}$ is a parallel composition
of threads $T_0, \ldots,\allowbreak T_{n-1}$. We denote by $h = (h_{T_0},
\ldots, h_{T_{n-1}})$ the vector of program counters of the threads. A program
counter (pc) points at the next operation to be executed. We use the predicate
$\textsf{lock}(T, h)$ (resp. $\textsf{unlock}(T, h)$) to denote that the next
operation to be executed by thread $T$ is a lock($l$) (resp. unlock($l$)). We
use $\textsf{term}(T, h)$ to denote that thread $T$ has terminated.

Let $M$ be a finite, fixed-size set of shared memory locations $x_1, \ldots,
x_{|M|}$. A state is a total function $s\colon M \rightarrow V$ from $M$ to the
set of values $V$. We denote the set of all states by $S$. We assume there is a
transition relation $\rightarrow$ between program configurations $(P, h, s)$. We
normally leave off $P$ when it is clear from context. The transition relation is
generated according to interleaving semantics, and each transition step
corresponds to an execution step of exactly one thread and accesses exactly one
shared memory location or performs a lock operation. We denote by $h_s = (h_{s, T_0}, \ldots,
h_{s, T_{n-1}})$ the initial pc vector with each thread at its entry point, and
by $h_f = (h_{f, T_0}, \ldots, h_{f, T_{n-1}})$ the final pc vector with each
thread having terminated.

We define a \emph{program execution fragment} $e$ as a (finite or infinite) sequence of
configurations such that successive configurations are related by $\rightarrow$.
A \emph{program execution} is an execution fragment that starts in a configuration with
pc vector $h_s$, and either has infinite length (i.e. does not terminate) or
ends in a configuration with pc vector $h_f$. A \emph{program execution prefix} is a
finite-length execution fragment that starts in a configuration with pc vector
$h_s$. Given an execution fragment such as $e = (h_0, s_0)(h_1, s_1)\ldots(h_n,
\allowbreak s_n)$, we use indices $0$ to $n-1$ to refer to the corresponding
execution steps. For example, index $0$ refers to the first execution step from
$(h_0, s_0)$ to $(h_1, s_1)$. We next define several predicates and functions on
execution fragments.

\vspace*{2ex}
\begin{tabular}{ll}
$\textsf{wr}(e, i)$: & step $i$ of $e$ is a shared write\\
$\textsf{rd}(e, i)$: & step $i$ of $e$ is a shared read\\
$\textsf{mem}(e, i)$: & $\swr(e, i) \vee \srd(e, i)$\\
$\textsf{conflict}(e, i, j)$: & $\textsf{loc}(e, i) = \textsf{loc}(e, j) \wedge (\textsf{wr}(e, i) \vee \textsf{wr}(e, j))$\\
$\textsf{lock}(e, i)$: & step $i$ of $e$ is a lock\\
$\textsf{unlock}(e, i)$: & step $i$ of $e$ is an unlock\\
$\textsf{loc}(e, i)$: & memory location/lock accessed by step $i$ of $e$\\
$\textsf{th}(e, i)$: & thread that performed step $i$ of $e$\\
$\textsf{src}(e, i)$: & source configuration of step $i$ of $e$\\
$\textsf{tgt}(e, i)$: & target configuration of step $i$ of $e$\\
$\textsf{initial}(e)$: & initial state of execution $e$\\
$\textsf{final}(e)$: & final state of execution $e$, or $\bot$ if $e$ is
infinite
\end{tabular}
\vspace*{2ex}

\noindent
We usually leave the execution $e$ off when it is clear from context. The
expression $\ssrc(e, i)$ (resp. $\stgt(e, i)$) refers to the configuration to the
left (resp. right) of $\rightarrow$ of the transition corresponding to step $i$
of $e$.

We next define the semantics of a program according to interleaving semantics as
the set of its initial/final state pairs. 

\begin{definition}[program semantics]
$\mathbb{M}(P) = \{(s, s')~|~$there exists an execution $e$ of $P$ such that
$|e| < \infty \wedge {\fontshape{n}\textsf{initial}}(e) = s \wedge
{\fontshape{n}\textsf{final}}(e) = s'\}$
\end{definition}

%\begin{definition}[program semantics]
%$\mathbb{M}(P) = \{(s, s')~|~$there exists an execution $e$ of $P$ such that
%$|e| < \infty \wedge {\fontshape{n}\textsf{initial}}(e) = s \wedge
%{\fontshape{n}\textsf{final}}(e) = s'\}$

%\[
%\mathbb{M}(P) =
%\begin{cases} 
%\{(s, s')~|~$there exists an execution $e$ of $P$ such that\\$|e| < \infty \wedge {\fontshape{n}\textsf{initial}}(e) = s \wedge
%{\fontshape{n}\textsf{final}}(e) = s'\}\\
%S & \srace(P)\\
%\end{cases}
%\]

%\[
%\mathbb{M}(P) =
%\begin{cases} 
%\{(s, s')~|~\text{there exists an execution}~e~\text{of}~P~\text{such that}\\
%~~~~~~~~~~~|e| < \infty \wedge {\fontshape{n}\textsf{initial}}(e) = s \wedge
%{\fontshape{n}\textsf{final}}(e) = s'\} & \sracefree(P)\\
%S \times S & \srace(P)\\
%\end{cases}
%\]

%\end{definition}

\noindent
Only finite executions are relevant for the program semantics as defined above.
Consequently, two programs $P'$, $P$ for which $\mathbb{M}(P') = \mathbb{M}(P)$
might have different behavior. For example, $P'$ might have a nonterminating
execution while $P$ might always terminate. The programs $P'$ and $P$ are thus
only \emph{partially equivalent}.

We next define the sequenced-before ($\textsf{sb}$), synchronizes-with
($\textsf{sw}$), and happens-before ($\textsf{hb}$) relation for a given
execution $e$ (with $|e| = n$). It holds that $(i, j) \in \textsf{sb}$ if $ 0
\le i < j < n$ and $\textsf{th}(i) = \textsf{th}(j)$. It holds that $(i, j) \in
\textsf{sw}$ if $0 \le i < j < n$, $\textsf{unlock}(i)$, $\textsf{lock}(j)$, and
$\textsf{loc}(i) = \textsf{loc}(j)$. The happens before relation $\textsf{hb}$
is then the transitive closure of $\textsf{sb} \cup \textsf{sw}$.

\begin{definition}[hb race]
We say an execution $e$ (with $|e| = n$) contains an \emph{hb data race},
written ${\fontshape{n}\shbrace}(e)$, if there are $0 \le i < j < n$ such that ${\fontshape{n}\sth}(i) \neq {\fontshape{n}\sth}(j)$,
${\fontshape{n}\sloc}(i) = {\fontshape{n}\sloc}(j)$, ${\fontshape{n}\swr}(i)$ or ${\fontshape{n}\swr}(j)$, and $(i, j) \notin {\fontshape{n}\shb}$.
\end{definition}

\begin{definition}[adjacent access race]
We say an execution $e$ (with $|e| = n$) contains an \emph{adjacent access data
race}, written ${\up\saarace}(e)$, if there are $0 \le i < j < n$ with $j - i
= 1$, ${\up\sth}(i) \neq {\up\sth}(j)$, ${\up\sloc}(i) = {\up\sloc}(j)$, and ${\up\swr}(i)$ or ${\up\swr}(j)$. 
\end{definition}

\noindent
The following lemma shows that these two data race definitions are equivalent
when they are lifted to the level of programs. For a proof see e.\,g. Boehm and
Adve~\cite{boehm:2008}.

\begin{lemma}
A program has an execution that contains an hb data race if and only if it has
an execution that contains an adjacent access data race.
\end{lemma}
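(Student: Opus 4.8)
The plan is to prove the two directions separately, keeping in mind that the statement quantifies existentially over executions: the execution witnessing the adjacent access race need not be the one witnessing the hb race. The easy direction is $(\Leftarrow)$. Suppose some execution $e$ of $P$ satisfies $\saarace(e)$, witnessed by adjacent steps $i$ and $j=i+1$ with $\sth(i)\neq\sth(j)$, $\sloc(i)=\sloc(j)$, and $\swr(i)\vee\swr(j)$. I would show $e$ itself satisfies $\shbrace(e)$ on the same pair, i.e. that $(i,j)\notin\shb$. Since $\textsf{sb}$ relates only same-thread steps, $(i,j)\notin\textsf{sb}$; since at least one step writes $\sloc(i)$, both are memory accesses rather than lock operations, so $(i,j)\notin\textsf{sw}$; and no index lies strictly between $i$ and $i+1$, so no transitive chain can connect them. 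Hence $(i,j)\notin\shb$ and the conflict is an hb race.

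For the hard direction $(\Rightarrow)$ I would use a commutation/reordering argument. Among all executions $e$ of $P$ and all hb-racing pairs $(i,j)$ in $e$, pick one minimizing $j-i$. I claim the minimizer has $j-i=1$; since the pair is then adjacent, on different threads, and conflicting, that execution satisfies $\saarace$ and we are done. Assume for contradiction $j-i\ge 2$, and write $k=j-1$, so that $i<k<j$.

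The engine is a transposition sub-lemma: if two adjacent steps $p,p+1$ are on different threads and are \emph{independent} --- meaning they are neither a same-location conflict with a write nor an unlock/lock pair on the same lock --- then swapping them yields another valid execution of $P$ with the same initial and final state, and it preserves the hb-ordering of every pair other than $(p,p+1)$. I would apply this to $k$ and $j$. If $k$ and $j$ are independent, swapping them moves the later racing access to position $k$, producing a race with gap $k-i<j-i$ (same threads, same location, same write, and $(i,\text{old }j)\notin\shb$ preserved), contradicting minimality. If instead $\sth(k)\neq\sth(j)$ and they conflict, then $e$ already exhibits an adjacent access race and we are finished. In every remaining case $(k,j)$ is a \emph{direct} $\textsf{sb}$ or $\textsf{sw}$ edge (adjacency rules out a chain), so $(k,j)\in\shb$; combined with $(i,j)\notin\shb$ and transitivity this forces $(i,k)\notin\shb$. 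The symmetric analysis of $i$ and $i+1$ shows that either we can reduce or finish, or $(i,i+1)$ is a direct edge with $(i+1,j)\notin\shb$.

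The main obstacle is exactly the residual case in which both neighbours are hb-pinned, $(i,i+1)\in\shb$ and $(j-1,j)\in\shb$, so neither racing access can be moved toward the other by a single transposition while the window $(i+1,\dots,j-1)$ is nonempty. Closing this requires a \emph{bubble-down} argument that repeatedly commutes the interior steps out of the window --- e.g. an inner induction on the number of steps between $i$ and $j$ that hb-precede $j$ --- until the two accesses become adjacent. The delicate point throughout is the invariant $(i,j)\notin\shb$: I must check that no transposition ever introduces a new $\textsf{sw}$ edge, the only way a new hb path from $i$ to $j$ could arise, which holds because an unlock and a lock on the same lock cannot be reordered in a valid execution (one thread cannot release a lock held by another). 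This is the crux; the remaining reordering bookkeeping is routine, and the full argument is given by Boehm and Adve~\cite{boehm:2008}.
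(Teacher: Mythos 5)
You should know at the outset that the paper does not prove this lemma at all: it states it and defers to Boehm and Adve~\cite{boehm:2008}, the same reference you fall back on, so the comparison here is between your reconstruction and the cited literature rather than an in-paper argument. Your $(\Leftarrow)$ direction is complete and correct: adjacency rules out $\textsf{sb}$ (different threads), rules out $\textsf{sw}$ (a racing pair contains a write, hence is not an unlock/lock pair), and rules out any transitive chain (no intermediate index), so an adjacent access race is in particular an hb race. Your $(\Rightarrow)$ skeleton is also the right one --- minimize $j-i$ over all executions and hb-racing pairs, and commute adjacent, different-thread, independent steps --- and your key soundness check is exactly right: a swap could only create a new $\textsf{sw}$ edge if the pre-swap order were a lock($l$) immediately followed by a different thread's unlock($l$), which cannot occur in a valid execution.

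The genuine gap is the residual case you explicitly leave open, and you overestimate its difficulty: it closes with the tools you already set up, with no inner induction, because the outer minimality does all the work. Let $B=\{k \mid i<k<j,\ (k,j)\in\shb\}$; in your residual case $j-1\in B$, so $B\neq\emptyset$; let $b=\min B$. For every window step $m$ to the left of $b$ you have $(m,b)\notin\shb$, since $(m,b)\in\shb$ together with $(b,j)\in\shb$ would put $m$ in $B$; likewise $(i,b)\notin\shb$, since otherwise $(i,j)\in\shb$. Hence whenever $b$ is adjacent to such a step, the two are on different threads (an $\textsf{sb}$ edge would contradict this) and are not an unlock/lock pair (an $\textsf{sw}$ edge would too); so either they conflict --- in which case you have an adjacent access race on the spot, because the ${\up\saarace}$ definition imposes no hb condition --- or your transposition lemma applies and $b$ moves left. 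All of these non-hb facts persist through the swaps by the hb-preservation clause of your own lemma. So bubbling $b$ leftward either produces an adjacent access race en route or expels $b$ past $i$, leaving $(i,j)$ as an hb race at distance $j-i-1$, contradicting minimality. Writing out these few lines would make your proof self-contained; as submitted, the $(\Rightarrow)$ direction is incomplete, with its crux delegated to the very citation the paper itself uses.
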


\noindent
We write $\srace(P)$ to indicate that program $P$ has an execution that contains
a data race, and $\sracefree(P)$ to indicate that it does not have an execution
that has a data race. We are now in a position to define thread refinement.

\begin{definition}[refinement]\label{def:refinement}
We say that $T'$ is a refinement of $T$, written ${\up\sref}(T', T)$, if the following
holds:\\\vspace{-1ex}

\noindent
~~~~~~~~$\forall\,C \colon {\up\sracefree}(T\!\parallel\!C) \Rightarrow ({\up\sracefree}(T'\!
\parallel\!C) \wedge \mathbb{M}(T'\!\parallel\!C) \subseteq \mathbb{M}(T\!\parallel\!C))$
\end{definition}

\noindent
The definition says that for all contexts $C$ with which $T$ is data-race-free,
$T'$ is also data-race-free, and the set of initial/final state pairs of
$T'\!\parallel\!C$ is a subset of the set of initial/final state pairs of
$T\!\parallel\!C$.

The above definition is not directly suited for automated
refinement checking, as it would require implementing the $\forall$ quantifier
(and hence enumerating all possible contexts $C$). We thus develop in the
following our state-based refinement specification that implies $\sref(T', T)$,
and which is more amenable to automated and manual reasoning about refinement.

We next define the transition relation $\rightarrow^*$, which is more
coarse-grained than $\rightarrow$. It will form the basis of the refinement
specification.

\begin{definition}[$\rightarrow^*$]
$(P, h, s) \xrightarrow{l, (R_a, W_a), (R_b, W_b)}^* (P, h', s')$ if and only if
there exists an execution fragment $e = (h_0, s_0)(h_1, s_1),\allowbreak \ldots,
\allowbreak (h_k, s_k),\allowbreak \ldots, (h_n, s_n)$ such that ${\up\sth}(0) = {\up\sth}(1)
= \ldots = {\up\sth}(n-1) = T$ for some thread $T$ of $P$, ${\up\slock}(0)$,
${\up\smem(1)}, \ldots, {\up\smem(k-1)},{\up\sunlock}(k)$, ${\up\smem(k+1)}, \ldots, {\up\smem(n-1)}$,
either ${\up\slock}(T,\allowbreak h_n)$ or ${\up\sterm}(T, h_n)$, $\textsf{loc}(0) = l$, $h_0 = h$ and $h_n = h'$.
The set $R_a$ (resp. $W_a$) is the set of memory locations read (resp. written)
by steps $1$ to $k-1$. The set $R_b$ (resp. $W_b$) is the set of memory
locations read (resp. written) by steps $k+1$ to $n-1$.
\end{definition}

%\begin{definition}[$\rightarrow^*$]
%$(P, h, s) \xrightarrow{(R_a, W_a), (R_b, W_b)}^* (P, h', s')$ if and only if
%there exists an execution fragment $e = (h_0, s_0)(h_1, s_1),\allowbreak \ldots,
%\allowbreak (h_k, s_k),\allowbreak \ldots, (h_n, s_n)$ such that ${\up\sth}(0) = {\up\sth}(1)
%= \ldots = {\up\sth}(n-1) = T$ for some thread $T$ of $P$, ${\up\slock}(0)$,
%${\up\smem(1)}, \ldots, {\up\smem(k-1)},{\up\sunlock}(k)$, ${\up\smem(k+1)}, \ldots, {\up\smem(n-1)}$,
%either ${\up\slock}(T,\allowbreak h_n)$ or ${\up\sterm}(T, h_n)$, $h_0 = h$ and $h_n = h'$.
%The set $R_a$ (resp. $W_a$) is the set of memory locations read (resp. written)
%by steps $1$ to $k-1$. The set $R_b$ (resp. $W_b$) is the set of memory
%locations read (resp. written) by steps $k+1$ to $n-1$.
%\end{definition}

\noindent
We also use the abbreviations $A_a = R_a \cup W_a$ and $A_b = R_b \cup W_b$. The
relation $\rightarrow^*$ embodies uninterrupted execution of a thread $T$ of $P$ from a
lock($l$) to the next lock($l$) (or the thread terminates). Since we have excluded
nested locks, this means the thread executes exactly one unlock($l$) in between. For
example, in Figure~\ref{subfig:statebased} (left trace), the execution from the
first lock in line~1 to immediately before the second lock in line~7 corresponds
to a transition of $\rightarrow^*$. If we assume the thread starts in a state
with all variables being $0$, we have $s = \{x = 0, y = 0, z = 0\}$ and $s' =
\{x = 2, y = 0, z = 0\}$. The corresponding access sets are $R_a = \{\}, W_a =
\{x\}$, and $R_b = \{x, y\}, W_b = \{\}$.

We now define the semantics of a single thread $T$ as the set of its \emph{state
traces}. A state trace is a finite sequence of the form $(l_0, s_0, R_0, W_0)\allowbreak(R_1, W_1, s_1)
(l_2, s_2, R_2,\allowbreak W_2)\allowbreak(R_3, W_3, s_3)\allowbreak\ldots\allowbreak(l_{n-1}, s_{n-1}, R_{n-1}, W_{n-1})(R_n, W_n, s_n)$.
Two items $i$, $i+1$ (with $i$ being even) of a state trace belong together. The
item $i$ corresponds to execution starting in state $s_i$ at
a lock($l$) and executing up to the next unlock($l$), with the thread reading
the variables in $R_i$ and writing the variables in $W_i$. The subsequent item
$i+1$ corresponds to execution continuing at the unlock($l$) and executing until
the next lock($l$) reaching state $s_{i+1}$, with the thread reading the
variables in $R_{i+1}$ and writing the variables in $W_{i+1}$.

The formal definition of the state trace set $\mathbb{S}(T)$ is shown in
Figure~\ref{fig:state_traces}.
Intuitively, the state trace set of a thread $T$ embodies all interactions it
could potentially have with a context $C$ for which $\sracefree(T\!\parallel\!C)$.
A thread might observe writes by the context at a lock($l$) operation. This is
modeled in $\mathbb{S}(T)$ by the state changing between transitions. For
example, the target state $s_1$ of the first transition is different from the
source state $s_2$ of the second transition. The last line of the definition of
$\mathbb{S}(T)$ constrains how the state may change between transitions. It says
that those memory locations that the thread $T$ accesses in an execution portion
from an unlock($l$) to the next lock($l$) (i.\,e. those in $A_{i-1}$) do not
change at this lock($l$). The reason for this is that if those memory locations
would be written by the context then there would be a data race. But since
$\mathbb{S}(T)$ only models the potential interactions with race-free contexts,
the last line excludes those state traces.

Previously we stated that we are interested in the states of a thread at lock
and unlock operations, but $\mathbb{S}(T)$ embodies transitions from a lock$(l)$
to the next lock($l$). However, since we know the state at a lock($l$), and
we know the set of memory locations $W_i$ written between the previous unlock($l$)
and that lock($l$), we know the state of the memory locations $M-W_i$ at the
unlock($l$). This is sufficient for phrasing the refinement in the following.

%It is sufficient that those memory locations have the same value at
%the unlock($l$).

%%% S(T)
\begin{figure}[t]
\normalsize
\begin{flalign*}
\mathbb{S}(T)\!=\!\{ &(l_0, s_0, R_0, W_0)(R_1, W_1, s_1)(l_2, s_2, R_2, W_2)(R_3, W_3, s_3)\ldots(R_n, W_n, s_n)\,|\\[1.5ex]
&\exists h_0, h_2, \ldots, h_{n+1}\colon\\
&(T, h_0, s_0) \xrightarrow{l_0, (R_0, W_0), (R_1, W_1)}^* (T, h_2, s_1)~\wedge\\
&(T, h_2, s_2) \xrightarrow{l_2, (R_2, W_2), (R_3, W_3)}^* (T, h_4, s_3)~\wedge\\
&\ldots\\
&(T, h_{n-1}, s_{n-1}) \xrightarrow{l_{n-1},(R_{n-1}, W_{n-1}), (R_n, W_n)}^* (T, h_{n+1}, s_n)~\wedge\\
&h_0 = h_s~\wedge\\
&\forall i \in \seven_n^+\colon \forall x \in A_{i-1}\colon s_{i-1}(x) = s_i(x) \}
\end{flalign*}
\caption{Definition of the state trace set of a thread}
\label{fig:state_traces}
\end{figure}

We are now in a position to define the $\smatch(t', t)$ predicate, which indicates
when a state trace $t' \in \mathbb{S}(T')$ matches a state trace $t \in
\mathbb{S}(T)$. The formal definition is shown in Figure~\ref{fig:match}. 
Primed symbols refer to components of $t'$, and unprimed symbols refer to
components of $t$. We denote by $\textsf{even}_n$ ($\textsf{odd}_n$) the set of
all even (odd) indices $i$ such that $0 \le i \le n$.
Intuitively, the constraints in lines $3$-$6$ specify that $t'$ must not allow
more data races than $t$. The constraints in lines $3$-$4$ correspond to an execution
portion from a lock($l$) to the next unlock($l$), and lines $5$-$6$ correspond to
an execution portion from the unlock($l$) to the next lock($l$). Since we have
$R_i' \subseteq A_{i-1} \cup A_i \cup A_{i+1}$ and $W_i' \subseteq W_{i-1}
\cup W_i \cup W_{i+1}$, the specification allows an access in $t$ to move into
a critical section in $t'$ (we further investigate this in Section~\ref{sec:optimizations}).
The constraint in line 7 specifies that $t'$ and $t$ receive the same new values
at lock($l$) operations (modeling writes by the context). The constraint at
line 9 specifies that the values written by $t'$ and $t$ before unlock($l$)
operations must be the same.
The last constraint specifies that $t'$ and $t$ perform the same sequence of lock
operations.

%%% match(t', t)
%\vspace*{-2ex}
\begin{figure}[t]
\normalsize
\begin{flalign*}
&{\up\smatch}(t', t) \Leftrightarrow\\[1ex]
&\mi \text{\scriptsize 1}~~~{|u'| = |u|}\\
&\mi \text{\scriptsize 2}~~~{\textbf{let}~n = |u|~\textbf{in}}\\[1ex]
&\mi {\tt\#~race~constraints}\\
&\mi \text{\scriptsize 3}~~~\forall i \in \textsf{even}_n\colon R_i' \subseteq (A_{i-1} \cup A_i \cup A_{i+1})\\
&\mi \text{\scriptsize 4}~~~\forall i \in \textsf{even}_n\colon W_i' \subseteq (W_{i-1} \cup W_i \cup W_{i+1})\\
&\mi \text{\scriptsize 5}~~~\forall i \in \textsf{odd}_n\colon\hspace*{1mm} R_i' \subseteq A_i\\
&\mi \text{\scriptsize 6}~~~\forall i \in \textsf{odd}_n\colon\hspace*{1mm} W_i' \subseteq W_i\\[1ex]
&\mi {\tt\#~state~at~locks~constraints}\\
&\mi \text{\scriptsize 7}~~~\forall i \in \textsf{even}_n\colon \forall x \in M-A_{i-1}\colon s_i'(x) = s_i(x)\\
&\mi \text{\scriptsize 8}~~~\forall i \in \textsf{even}_n\colon \forall x \in A_{i-1}-A_{i-1}'\colon s_{i-1}'(x) = s_i'(x)\\[1ex]
&\mi {\tt\#~state~at~unlocks~constraints}\\
&\mi \text{\scriptsize 9}~~~\forall i \in \textsf{odd}_n\colon\hspace*{1mm} \forall x \in M-W_i\colon s_i'(x) = s_i(x)\\[1ex]
&\mi {\tt\#~same~locks~constraint}\\
&\mi \text{\scriptsize 10}~~~\forall i \in \textsf{even}_n\colon l_i' = l_i
\end{flalign*}
\vspace*{-2ex}
\caption{Definition of matching state traces}
\label{fig:match}
\end{figure}
%\vspace*{-2ex}

%%%% match(t', t)
%\begin{figure}
%\normalsize
%\begin{flalign*}
%&{\up\smatch}(t', t) \Leftrightarrow\\[1ex]
%&\mi \text{\scriptsize 1}~~~{|u'| = |u|}\\
%&\mi \text{\scriptsize 2}~~~{\textbf{let}~n = |u|~\textbf{in}}\\[1ex]
%&\mi {\tt\#~race~constraints}\\
%&\mi \text{\scriptsize 3}~~~\forall i \in \textsf{even}_n\colon R_i' \subseteq (A_{i-1} \cup A_i \cup A_{i+1})\\
%&\mi \text{\scriptsize 4}~~~\forall i \in \textsf{even}_n\colon W_i' \subseteq (W_{i-1} \cup W_i \cup W_{i+1})\\
%&\mi \text{\scriptsize 5}~~~\forall i \in \textsf{odd}_n\colon\hspace*{1mm} R_i' \subseteq A_i\\
%&\mi \text{\scriptsize 6}~~~\forall i \in \textsf{odd}_n\colon\hspace*{1mm} W_i' \subseteq W_i\\[1ex]
%&\mi {\tt\#~lock~constraints}\\
%&\mi \text{\scriptsize 7}~~~\forall i \in \textsf{even}_n\colon \forall x \in M-A_{i-1}\colon s_i(x) = s_i'(x)\\
%&\mi \text{\scriptsize 8}~~~\forall i \in \textsf{even}_n\colon \forall x \in A_{i-1}-A_{i-1}'\colon s_{i-1}'(x) = s_i'(x)\\[1ex]
%&\mi {\tt\#~unlock~constraints}\\
%&\mi \text{\scriptsize 9}~~~\forall i \in \textsf{odd}_n\colon\hspace*{1mm} \forall x \in M-W_i\colon s_i(x) = s_i'(x)
%\end{flalign*}
%\caption{Definition of matching state traces}
%\label{fig:match}
%\end{figure}

We can now define our refinement specification $\scheck(T', T)$ which we later
show implies the refinement specification $\sref(T', T)$ of Definition~\ref{def:refinement}.
We denote by $t[0\colon\! i]$ the slice of a trace from index 0 to index $i$ (exclusive).

%%% check(T', T)
\begin{definition}[check]
\begin{flalign*}
&{\up\scheck}(T', T) \Leftrightarrow\\
&\mi \forall t' \in \mathbb{S}(T')\colon \exists t \in \mathbb{S}(T)\colon\\
&\mii {\up\smatch}(t', t) \vee\\
&\mii \exists i \in {\up\seven_n^+}\colon\\
&\miii {\up\smatch}(t'[0\colon\!i], t[0\colon\!i]) \wedge\\
&\miii \exists x \in (A_{i-1}-A'_{i-1})\colon s_{i-1}'(x) \neq s'_i(x)
\end{flalign*}
\end{definition}

\noindent
The definition says that either $t'$ and $t$ match, or there are prefixes
that match, and at the subsequent lock($l$) a memory location in $t'$ changes that is
accessed by $t$ but not by $t'$ ($x \in A_{i-1} - A_{i-1}')$. Thus, a context
that could implement the change of the memory location that $t'$ observes would
have a data race with $t$. Since when $t$ is involved in a data race we have
undefined behavior, any behavior of $t'$ is allowed. Hence, we consider the traces
$t'$ and $t$ matched.

We next state two lemmas that we use in the soundness proof of $\textsf{check}(T', T)$.
%We refer to the extended version of the paper for the corresponding proofs~\cite{poetzl:2015}.

\begin{lemma}[coarse-grained interleaving]\label{lem:coarse}
Let $e$ (with $|e| = n$) be an execution prefix of $P$ with $\neg{\up\shbrace}(e)$ and ${\up\sfinal}(e) = s$.
Then there is an execution prefix $e'$ of $P$ with $\neg{\up\shbrace}(e')$ and ${\up\sfinal}(e') = s$, such
that execution portions from a {\upshape lock($l$)} to the next {\upshape lock($l$)} of a thread are not interleaved with other threads.
Formally:\\\vspace{-1.2ex}

\noindent
~~~~$\forall\,0 \le i < n\colon {\up\slock}(i) \Rightarrow \exists j > i\colon ({\up\slock}({\up\sth}(i), {\up\stgt(j)}) \vee {\up\sterm}({\up\sth}(i), {\up\stgt(j)})
\wedge\\~~~~~\forall i < k < j\colon {\up\sth}(k) = {\up\sth}(i))$
\end{lemma}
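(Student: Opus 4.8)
The plan is to prove the lemma by an exchange (reduction) argument: I transform $e$ into the desired $e'$ by repeatedly transposing \emph{adjacent} steps performed by \emph{different} threads, arranging that each transposition preserves validity as a program execution prefix, the final state, and race-freedom. The cornerstone is a commuting lemma: if steps $p$ and $p+1$ of a race-free prefix are performed by different threads and $(p, p+1) \notin \shb$, then exchanging them yields another valid prefix with the same final state, with program order within each thread unchanged, and again satisfying $\neg\shbrace$. I would prove this by a case analysis on the two operations. Since $p$ and $p+1$ are adjacent and on different threads, $(p,p+1) \in \shb$ can hold only via a direct $\textsf{sw}$ edge, i.e. $\sunlock(p) \wedge \slock(p+1) \wedge \sloc(p) = \sloc(p+1)$; so the hypothesis $(p,p+1)\notin\shb$ excludes exactly the synchronization pair, and $\neg\shbrace(e)$ (which forces every conflicting cross-thread pair to be hb-ordered) additionally excludes $\sconflict(e,p,p+1)$. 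Hence the two operations touch different locations or are both reads, neither enables nor disables the other, and executing them in the opposite order from $\ssrc(p)$ reaches the identical configuration $\stgt(p+1)$. Race-freedom is preserved because the swap alters the relative order of no pair other than $(p,p+1)$, so every $\textsf{sb}$ edge and every $\textsf{sw}$ edge among the remaining steps survives, no conflicting pair of accesses is changed, and the only hb edge that could appear or vanish is $(p,p+1)$, which is absent before and after.

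Next I would isolate the structural fact that makes contiguity achievable: no foreign step can be hb-trapped strictly inside a lock-to-lock portion. Fix a lock step $i$ by thread $T$; by the standing assumptions the portion it opens has the shape $\slock,\smem^{*},\sunlock,\smem^{*}$ up to the configuration where $T$ is again poised at a lock (or has terminated). A foreign step $f$ that is hb-forced after some step of this portion must be hb-after the portion's single $\sunlock$, the only $\textsf{sw}$-source the portion contributes. But $f$ can be hb-forced before a later step of the portion only through a $\textsf{sw}$ edge into a $\slock$ of $T$ inside the portion, and the portion contains no lock of $T$ after its unlock (the next lock is precisely its right boundary). Consequently $f$ is hb-unordered relative to all of the portion's steps up to that boundary, and since none of those steps is a lock until the boundary, the commuting lemma lets me move $f$ rightward past every remaining step of the portion.

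With these two ingredients the proof is a normalization by bubbling, organized as an induction that peels off the first portion. Step $0$ is a lock (the first operation of a thread is a lock); let $T = \sth(0)$. Using the commuting lemma and the structural fact, I move every foreign step interleaved in $T$'s opening portion to the right, past the point $m$ where $T$ first becomes poised at its next lock, so that steps $0,\ldots,m$ are exactly $T$'s portion and are contiguous. Because these $m+1$ steps are now all by $T$, no other thread has acted before position $m+1$, so the relocated foreign steps land in a region where no portion is open and create no new interruption. The suffix after $m$ is a shorter race-free execution fragment of the residual computation; the coarse-grained predicate is a conjunction of one local condition per lock step, and the condition for step $i=0$ holds by construction (take $j=m$), so applying the induction hypothesis to the suffix discharges the remaining lock steps. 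By the commuting lemma the resulting $e'$ is a valid execution prefix of $P$ with $\neg\shbrace(e')$ and $\sfinal(e') = s$, and it satisfies the stated condition.

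The main obstacle I anticipate is the bookkeeping of the commuting lemma rather than any deep difficulty: I must check that pushing a foreign operation rightward never turns a valid prefix into an invalid one, in particular that a relocated lock never attempts to acquire a lock currently held — which cannot happen precisely because the blocking case is the $\textsf{unlock}$/$\textsf{lock}$ $\textsf{sw}$ pair we have excluded — and that the final state is literally unchanged rather than merely equivalent. A secondary subtlety is guaranteeing that bubbling one foreign step out of the first portion does not seed interruptions further right; the peeling organization handles this, since once the first portion is contiguous every relocated step is deposited into a prefix-region containing no open portion. Generalizing the statement to race-free execution fragments starting from an arbitrary reachable configuration, so that the induction hypothesis applies to the suffix, is routine given that $\neg\shbrace$ and $\sfinal$ are preserved by the construction.
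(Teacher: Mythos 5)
Your proposal is correct and is essentially the paper's own argument: both proofs are exchange arguments resting on the same two facts, namely that a happens-before edge can enter a thread only at one of its lock steps (so steps strictly inside a lock-to-lock portion are hb-unordered with respect to the interleaved foreign steps), and that race-freedom then forces those cross-thread pairs to be non-conflicting, hence commutable without changing the final state or introducing races. The only difference is bookkeeping and direction: the paper repeatedly takes a thread's non-lock step and moves it leftward in one block over the intervening foreign steps to adjoin its thread-predecessor, whereas you bubble the foreign steps rightward out of the portion by adjacent transpositions and organize the normalization as an induction that peels off the first portion (with the routine generalization to fragments from reachable configurations) --- a dual, and if anything more carefully justified, rendering of the same argument.
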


\renewcommand{\proofname}{Proof sketch}

\begin{proof}
Let $i$ be a step of $e$ with $\neg $lock($i$). Let $j$ be a step of $e$ with
$j < i$, $\sth(j) = \sth(i)$, such that $\forall j < k < i\colon \sth(k) \neq \sth(i)$.
It holds that $\forall j < k < i\colon (k, i) \notin \shb$.
Therefore, $\forall j < k < i\colon \neg \sconflict(k, i)$. Thus, step $i$ can be moved over the
steps $k$ and right after $j$ without changing the values read by any read
operation. We thus get a new execution prefix $e'$ with $\sfinal(e') = \sfinal(e)$.
Moreover, moving step $i$ cannot introduce a data race as it is moved
``upwards'' only and not past a lock($l$) operation.

The repeated application of picking a step $i$ with $\neg $lock($i$) and moving
it right after the previous step of the same thread finally yields an execution
prefix in which portions from a lock($l$) to the next lock($l$) are not
interleaved with other threads.

\qed
\end{proof}

%\noindent
%The proof of the lemma can be found in the extended version of the paper~\cite{poetzl:2015}.

%\begin{lemma}[addition]
%\label{lemma:restriction}
%Let $h = (p_1, p_2, \ldots, p_n)$, $h' = (p_1', p_2, \ldots, p_n)$, and $P = T_1 \parallel
%\ldots \parallel T_n$. Then $(T_1, p_1, s) \rightarrow (T_1, p_1', s')$ if and only if
%$(P, h, s) \rightarrow (P, h', s')$.
%\end{lemma}

%\begin{lemma}[substitution]
%\label{lemma:substitute}
%Let $h_1 = (p_1, p_2, \ldots, p_n)$, $h_1' = (p_1, p_2', \ldots, p_n')$, $P = T_1 \parallel
%\ldots \parallel T_n$, and $h_2 = (p_1', p_2, \ldots, p_n)$, $h_2' = (p_1', p_2', \ldots, p_n')$, $P' = T_1' \parallel \ldots \parallel T_n$.
%Then $(P, h_1, s) \rightarrow (P, h_1', s')$ if and only if
%$(P', h_2, s) \rightarrow (P', h_2', s')$.
%\end{lemma}

%\noindent
%Intuitively, the lemma states that if in a transition from state $s$ to state
%$s'$ of a parallel composition a thread does not make a step (i.e. its program
%counter does not change), then we can substitute a different thread for it and
%the resulting parallel composition can still perform the transition from $s$ to
%$s'$.

\begin{lemma}[race refinement]\label{lem:race_soundness}
${\up\scheck}(T', T)\!\Rightarrow\!\forall\,C\!\colon ({\up\srace}(T'\!\parallel\!C)\!\Rightarrow\!
{\up\srace}(T\!\parallel\!C))$
\end{lemma}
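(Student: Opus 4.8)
The plan is to prove the implication directly: assume $\scheck(T', T)$, fix an arbitrary context $C$, assume $\srace(T'\parallel C)$, and construct an execution of $T\parallel C$ that contains a data race. First I would invoke the lemma equating hb-races with adjacent-access races to obtain an execution of $T'\parallel C$ whose \emph{final} two steps form an adjacent-access race on some location $x$ (different threads, at least one a write), and whose strict prefix is free of hb-races. Applying Lemma~\ref{lem:coarse} to this race-free prefix normalizes it so that every lock-to-lock portion of every thread runs contiguously; from the contiguous portions belonging to $T'$ I would read off a state trace $t'\in\mathbb{S}(T')$ (or, if $T'$ does not reach its next $\slock$ before the race, the prefix of such a trace covering the racing access). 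The race-freedom of the prefix is exactly what guarantees the side condition in the definition of $\mathbb{S}$ (Figure~\ref{fig:state_traces}): the locations $T'$ touches between an $\sunlock$ and the next $\slock$ are stable across that $\slock$. Feeding $t'$ into $\scheck(T', T)$ yields a witness $t\in\mathbb{S}(T)$ for which either $\smatch(t', t)$ holds, or a proper prefix matches and $t'$ observes a diverging value at some $\slock$; the latter, prefix-based disjunct is designed precisely to accommodate the truncation just mentioned.

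Next I would build a companion execution of $T\parallel C$ by replaying the $C$-steps of the normalized execution verbatim and substituting, for each lock-to-lock portion of $T'$, the corresponding portion of $T$ taken from $t$. Line~10 of $\smatch$ (Figure~\ref{fig:match}) guarantees the two threads issue the same sequence of lock operations, so $C$ can be interleaved with $T$ exactly as it was with $T'$; lines~7 and~9 guarantee that $C$ observes the same values at the synchronization points and that $T$ observes the same context writes, so every read in $C$ returns the same value and $C$ follows the same control flow. This produces a genuine execution $e$ of $T\parallel C$ whose $C$-projection is identical to that of the original. I would then split on where the race sat. If both racing steps belonged to threads of $C$, the reconstructed $e$ contains the very same pair of steps and hence the same race, so $\srace(T\parallel C)$ is immediate. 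If $\scheck$ landed in its divergence disjunct, then at the relevant $\slock$ of index $i$ the context must be poised to write the location $x\in A_{i-1}-A'_{i-1}$ that $T$ (but not $T'$) accesses in that portion; scheduling that context write adjacent to $T$'s access to $x$ yields an adjacent-access race in $T\parallel C$.

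The remaining case---the one I expect to be the crux---is a race in $T'\parallel C$ between an access of $T'$ to $x$ and an access of $C$ to $x$. Here the race constraints of $\smatch$ (lines~3--6 of Figure~\ref{fig:match}) guarantee that $T$ also accesses $x$, but only in a \emph{nearby} region ($A_{i-1}\cup A_i\cup A_{i+1}$ for even $i$, and $A_i$ for odd $i$): the conflicting access may have moved into or out of a critical section between $T$ and $T'$. The obstacle is to turn this set-level guarantee into a concrete interleaving of $e$ in which $T$'s access to $x$ and $C$'s access to $x$ are not ordered by $\shb$. The key observation that makes this go through is that the offending $C$-access already raced with $T'$, so it is \emph{not} protected against the thread under optimization by any common lock; it can therefore be rescheduled so as to sit immediately before or after $T$'s (possibly relocated) access to $x$ without creating a synchronizes-with edge that would order the two. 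I would make this precise by taking a minimal racing prefix of the reconstructed execution and appealing once more to Lemma~\ref{lem:coarse} to slide $C$'s unsynchronized access adjacent to $T$'s access to $x$, exhibiting an adjacent-access race and hence $\srace(T\parallel C)$.
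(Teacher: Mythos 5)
Your overall architecture is the same as the paper's: obtain a racing execution of $T'\!\parallel\!C$, rewind to an hb-race-free prefix in which every thread sits at a lock operation and the racing lock-to-lock portions are still to come, normalize that prefix with Lemma~\ref{lem:coarse}, read off a state trace of $T'$ completed so that its last transition contains the racing access, obtain a witness trace of $T$ from $\scheck(T', T)$, substitute the $T$-portions for the $T'$-portions (justified by constraints 7, 9 and 10 of $\smatch$ in Figure~\ref{fig:match}), and case-split on the two disjuncts of $\scheck$. Your treatment of the divergence disjunct (the context writes some $x \in A_{i-1}-A_{i-1}'$ that $T$ accesses but $T'$ does not, hence $T$ races with $C$) is exactly the paper's second case. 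One mischaracterization: that disjunct is not there to ``accommodate truncation'' of the racing trace --- the racing portion is completed to a full lock-to-lock transition, as you in fact do --- but to handle context writes that $T$ would observe and $T'$ would not; your later use of it is nevertheless the intended one.

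The genuine gap is in what you yourself identify as the crux. First, Lemma~\ref{lem:coarse} cannot be used to ``slide'' $C$'s access next to $T$'s access: its hypothesis is that the execution prefix is hb-race-free, whereas the execution you want to rearrange is precisely one in which you are trying to exhibit a race, and the lemma only de-interleaves lock-to-lock portions --- it does not schedule a designated pair of steps adjacently. Second, the stated justification --- that the offending $C$-access ``is not protected against the thread under optimization by any common lock'' --- is false in general: when the racing access of $T'$ lies after its unlock($l$) (the case $x \in W_{k+1}'$), the racing access of $C'$ may sit \emph{inside} a critical section of the very same lock $l$, acquired via the synchronizes-with edge from $T'$'s unlock($l$); the pair still races only because sw edges are directed, i.e.\ $T'$'s post-unlock write is not $\shb$-before $C'$'s lock. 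What actually closes this case in the paper is an explicit case analysis: $\smatch$ gives $x \in W_k' \Rightarrow x \in W_{k-1} \cup W_k \cup W_{k+1}$ and $x \in W_{k+1}' \Rightarrow x \in W_{k+1}$, and for each possible placement of $T$'s write relative to its lock($l$)/unlock($l$) one exhibits a concrete interleaving of the continuations of $T$ and $C'$ in which no unlock-to-lock chain, hence no $\shb$ edge, lies between the two conflicting accesses. Your proposal needs this case analysis (or a corrected scheduling argument of equivalent precision); as written, the rescheduling step does not go through.
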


%\noindent
%We again refer to~\cite{poetzl:2015} for a proof sketch of the lemma.

%\renewcommand{\proofname}{Proof sketch}

\begin{proof}
Let $\srace(T'\!\parallel\!C)$. Then there is an execution that contains a data
race. A data race can either be between two threads in $C$, or between $T'$ and
a thread $C'$ in $C$. We assume the latter case. We further assume that
data races are between two writes on variable $x$. The other cases are analogous.

%\begin{adjustwidth}{4mm}{}
Since $\srace(T'\!\parallel\!C)$, there is an execution $e$ such that thread
$T'$ and thread $C'$ of $C$ are involved in an adjacent access data race.
Further, there is an (hb and adjacent access) race-free prefix $e'$ of $e$ such
that the next operation to be executed by each thread is a lock($l$), and the
%\begin{adjustwidth}{4mm}{}
next execution portions from a lock($l$) to the next lock($l$) of both $T'$ and
$C'$ are those involved in a data race.
%\end{adjustwidth}
%\end{adjustwidth}

Since the prefix $e'$ is data-race-free, by Lemma~\ref{lem:coarse} there is an
execution prefix $e''$ which ends in the same state as $e'$, and for which the
execution portions of a thread from a lock($l$) to the next lock($l$) are not
interleaved. Moreover, the execution of $T'$ and $C'$ can be continued from
$e''$ such that they are involved in an adjacent access data race. We denote
this continuation of $e''$ by $e'''$.

The sequence of execution portions of $T'$ in $e''$ corresponds to an element
$t' \in \mathbb{S}(T')$. The next execution portion of $T'$ from a lock($l$) to
the next lock($l$) after $e''$ is the one involved in the data race. Thus, $t'$
can be continued to $u' = t'(l_k', s_k', R_k', W_k')(R_{k+1}', W_{k+1}', s_{k+1}')$
such that $u' \in \mathbb{S}(T')$ and $x \in W_k' \cup W_{k+1}'$ (recall that we
assumed that data races are between two writes on variable $x$).

Then, by the definition of $\scheck(T', T)$, there is a $u \in \mathbb{S}(T)$
such that either (1) $\smatch(u', u)$ or (2)
$\exists i \in \textsf{even}_n\colon
\smatch(u'[0\colon\!i], u[0\colon\!i]) \wedge \exists x \in (A_{i-1} - A_{i-1}')
\colon s_{i-1}'(x) \neq s_i'(x)$.

\vspace*{2ex}
\begin{adjustwidth}{5mm}{}

(1) Let $u$ be of the form $t(R_{k-1}, W_{k-1}, s_{k-1})(l_k, s_k, R_k, W_k)
(R_{k+1}, W_{k+1}, s_{k+1})$. Since $t(R_{k-1}, W_{k-1}, s_{k-1})$ describes the
same state transitions as $t'$, the steps of $T'$ in $e''$ can be replaced by
steps of $T$. Then if this new execution prefix $q$ contains a data race we are
done (as we have $\srace(T\!\parallel\!C)$). We need to show that if the new
execution prefix $q$ does not contain a data race, then the next steps taken by
$T$ and $C'$ give rise to a data race.

%Since $u$ describes the same state transitions as $u'$, the steps of $T'$ in
%$e''$ can be replaced by steps of $T$. Then if this new execution prefix $q$
%contains a data race we are done (as we have $\srace(T\!\parallel\!C)$). We need
%to show that if the execution prefix $q$ does not contain a data race, then the
%next step taken by $T$ after $q$ gives rise to a data race.

By the definition of $\smatch(u', u)$, we have $W_k' \subseteq
W_{k-1} \cup W_k \cup W_{k+1}$ and $W_{k+1}' \subseteq W_{k+1}$.
In $e'''$, the access of $T'$ involved in the adjacent access data race might
either occur (a) between a lock($l$) and the subsequent unlock($l$) (i.\,e.
$x \in W_k'$), or (b)
between an unlock($l$) and the subsequent lock($l$) (i.\,e. $x \in W_{k+1}'$).

\vspace*{2ex}
\begin{adjustwidth}{5mm}{}

\noindent
(a) In this case the portion of $e'''$ containing the data race has the
following shape (portions denoted by an ellipsis ($\ldots$) contain only memory
accesses and no lock operations):\\

\noindent
\hspace{3ex}$\ldots,$ $T'\colon $lock($l$), $\ldots,$ $T'\colon W x$, $C'\colon W x$, $\ldots,$ $T'\colon$ unlock($l$), $\ldots$\\

\noindent
It further holds that $W_k' \subseteq W_{k-1} \cup W_k \cup W_{k+1}$. Thus, when
continuing to execute $T$ from $q$ a write to $x$ might occur either (i) before
the next lock$(l)$ ($x \in W_{k-1}$), (ii) between the next lock($l$) and unlock($l$) ($x
\in W_k$), or (iii) after the next unlock($l$) ($x \in W_{k+1}$).

%for $T$ the
%write to $x$ might occur either (i) before the lock$(l)$ ($x \in W_{k-1}$), (ii)
%between the lock($l$) and unlock($l$) ($x \in W_k$), or (iii) after the
%unlock($l$) ($x \in W_{k+1}$).

\vspace*{2ex}
\begin{adjustwidth}{5mm}{}

\noindent
(i): In this case there is a continuation $q'$ of $q$ that contains an execution
fragment of the following form:\\

\noindent
\hspace{3ex}$\ldots,$ $T\colon W x$, $\ldots,$ $T\colon$ lock($l$), $\ldots,$ $C'\colon W x$, $\ldots$\\

\noindent
By the definition of the happens-before relation (\textsf{hb}), we see that
there is no \textsf{hb} edge between the steps ``$T\colon W x$'' and ``$C'\colon W x$''. Therefore,
there is a data race between the two steps.\\

\noindent
(ii): We have the following execution portion:\\

\noindent
\hspace{3ex}$\ldots,$ $T\colon$ lock($l$), $\ldots,$ $T\colon W x$, $\ldots,$ $C'\colon W x$, $\ldots$\\

\noindent
There is no $\shb$ edge between ``$T\colon W x$'' and ``$C'\colon W x$'', and thus there is a data race.\\

\noindent
(iii): We have the following execution portion:\\

\noindent
\hspace{3ex}$\ldots,$ $C'\colon W x$, $\ldots,$ $T\colon$ unlock($l$), $\ldots,$ $T\colon W x$, $\ldots$\\

\noindent
There is no $\shb$ edge between ``$C'\colon W x$'' and ``$T\colon W x$'', and thus there is a data race.\\

\end{adjustwidth}
\end{adjustwidth}

\vspace*{2ex}
\begin{adjustwidth}{5mm}{}

(b) In this case the portion of $e'''$ containing the data race has the following
shape:\\

\noindent
\hspace{3ex}$\ldots,$ $T'\colon$ unlock($l$), $\ldots,$ $T'\colon W x$, $C'\colon x$, $\ldots$\\

\noindent
It holds that $W_{k+1}' \subseteq W_{k+1}$. Thus, when continuing to execute $T$
from $q$ a write to $x$ occurs after the unlock($l$) just the same.
%
%for $T$ the write to $x$ occurs
%after the unlock($l$) just the same. We have the following execution portion:\\
%
In this case there is a continuation $q'$ of $q$ that contains an execution
fragment of the following form:\\

\noindent
\hspace{3ex}$\ldots,$ $T\colon$ unlock($l$), $\ldots,$ $T\colon W x$, $\ldots,$ $C'\colon W x$, $\ldots$\\

\noindent
There is no $\shb$ edge between ``$T\colon W x$'' and ``$C'\colon W x$'', and thus there is a data race.

\end{adjustwidth}

\end{adjustwidth}

\vspace*{2ex}
\begin{adjustwidth}{5mm}{}
(2) Since $\smatch(u'[0\colon i], u[0\colon i])$, the first $i$ state transitions
described by $u$ are the same as those described by $u'$. Thus, we can
replace the first $i$ execution portions of $T'$ in $e''$ by execution portions of $T$. The last
execution portion of $T$ accesses a memory location $x$ that was not accessed by
the corresponding execution portion of $T'$ (since we have $\exists x \in
A_{i-1} - A_{i-1}'$). Moreover, by $s_{i-1}'(x) \neq s_i'(x)$ it follows that
this memory location is written by the context $C$. Thus, we have
$\srace(T \parallel C)$.
\end{adjustwidth}

\begin{adjustwidth}{0mm}{}
\qed
\end{adjustwidth}
\end{proof}

The following theorem establishes the soundness of our refinement specification
$\textsf{check}(T', T)$.

\begin{theorem}[soundness]\label{thm:soundness}
${\up\scheck}(T', T) \Rightarrow {\up\sref}(T', T)$
\end{theorem}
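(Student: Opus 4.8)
The plan is to establish, for an arbitrary context $C$ with $\sracefree(T\!\parallel\!C)$, the two conjuncts in the conclusion of $\sref(T', T)$ (Definition~\ref{def:refinement}) separately. The race-freedom conjunct $\sracefree(T'\!\parallel\!C)$ is immediate by contraposition from Lemma~\ref{lem:race_soundness}: since $\scheck(T', T)$ holds, that lemma gives $\srace(T'\!\parallel\!C) \Rightarrow \srace(T\!\parallel\!C)$, so $\neg\srace(T\!\parallel\!C)$ yields $\neg\srace(T'\!\parallel\!C)$. The real work is the semantic-inclusion conjunct $\mathbb{M}(T'\!\parallel\!C) \subseteq \mathbb{M}(T\!\parallel\!C)$.

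For the inclusion I would take an arbitrary $(s, s') \in \mathbb{M}(T'\!\parallel\!C)$ and build a finite execution of $T\!\parallel\!C$ with the same initial state $s$ and final state $s'$. By the definition of program semantics there is a finite execution $e$ of $T'\!\parallel\!C$ with $\sfinal(e) = s'$; since $\sracefree(T'\!\parallel\!C)$ from the first part, $e$ is race-free, so Lemma~\ref{lem:coarse} yields a race-free execution $e''$ with the same final state in which every portion of a thread from a lock($l$) to the next lock($l$) runs without interleaving. Projecting $e''$ onto the portions of $T'$ gives a state trace $t' \in \mathbb{S}(T')$: the closure condition in the last line of the definition of $\mathbb{S}(T')$ (that variables in $A_{i-1}$ are unchanged across each lock) holds precisely because $e''$ is race-free, as a context write to a variable $T'$ accesses unsynchronized would be a race.

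Next I would apply $\scheck(T', T)$ to this $t'$ to obtain a $t \in \mathbb{S}(T)$ in case~(1), $\smatch(t', t)$, or case~(2), matching prefixes together with a context-observed change at some lock on a variable in $A_{i-1} - A_{i-1}'$. Case~(2) can be excluded under the standing hypothesis: by exactly the construction in part~(2) of the proof of Lemma~\ref{lem:race_soundness}, it would let us replace the matching prefix of $T'$ by $T$ and expose an unsynchronized conflict between $T$ and the context write witnessing $s_{i-1}'(x) \neq s_i'(x)$, i.e.\ $\srace(T\!\parallel\!C)$, contradicting $\sracefree(T\!\parallel\!C)$. Hence $\smatch(t', t)$ holds.

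Finally, using the match I would transform $e''$ into an execution $q$ of $T\!\parallel\!C$ by replacing each non-interleaved portion of $T'$ with the corresponding portion of $T$ drawn from $t$, leaving every context portion untouched; the same-locks constraint (line~10 of $\smatch$) ensures the synchronization structure lines up. The final state is preserved because the state-at-unlock constraint (line~9) forces $T$ and $T'$ to leave identical values on every variable a race-free context can subsequently observe, while the state-at-lock constraints (lines~7--8) guarantee $T$ begins each portion from a state consistent with what the unchanged context produced. I expect this substitution step to be the main obstacle: one must argue, by induction over the portion boundaries, that $q$ is a genuine race-free execution in which the context follows the identical control flow and reads the identical values as in $e''$, and that $\sfinal(q) = \sfinal(e'') = s'$. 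This is where race-freedom of $T\!\parallel\!C$ (which prevents the context from ever observing the variables on which the two traces are permitted to differ) must be combined carefully with the access-set constraints (lines~3--6) of $\smatch$.
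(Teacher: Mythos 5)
Your proposal follows essentially the same route as the paper's proof: race-freedom of $T'\!\parallel\!C$ via Lemma~\ref{lem:race_soundness}, Lemma~\ref{lem:coarse} to obtain a non-interleaved execution whose $T'$-portions form a state trace $t' \in \mathbb{S}(T')$, exclusion of the prefix-match case of $\scheck$ by deriving $\srace(T\!\parallel\!C)$ in contradiction with the hypothesis, and finally substitution of the $T$-portions for the $T'$-portions justified by the state-at-lock/unlock constraints of $\smatch$. The only difference is presentational (you rule out case~(2) before handling the match case, and you are more candid that the portion-replacement step needs an induction the paper leaves implicit), so this is a correct reconstruction of the paper's argument.
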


\begin{proof}
Let $C$ be an arbitrary context $C$ such that $\sracefree(T\!\parallel\!C)$. Let further
$(s, s')$ in $\mathbb{M}(T'\!\parallel\!C)$. Thus, there is an execution $e$ of
$T'\!\parallel\!C$ that starts in state $s$ and ends in state $s'$. By Lemma~\ref{lem:race_soundness},
$\sracefree(T'\!\parallel\!C)$. Thus, by Lemma~\ref{lem:coarse}, there is an execution $e'$ for which
portions from a lock($l$) to the next lock($l$) of a thread are not interleaved with
other threads. The sequence of those execution portions of $T'$ corresponds to
an element of $t' \in \mathbb{S}(T')$. Then, by the definition of
$\scheck(T', T)$, there is an element $t \in \mathbb{S}(T)$ such that either (a)
$\smatch(t', t)$, or (b) $\exists i \in \textsf{even}_n\colon
\smatch(t'[0\colon i], t[0\colon i]) \wedge \exists x \in (A_{i-1} - A_{i-1}')
\colon s_{i-1}'(x) \neq s_i'(x)$.

\vspace*{2ex}
\begin{adjustwidth}{5mm}{}
(a) Then $t$ embodies the same state transitions as $t'$. This is ensured by
constraints 7 and 9 of the definition of $\smatch()$. Constraint 7 specifies that
the starting states of a transition match, and constraint 9 specifies that the
resulting states of a transition match. Taking a closer look at constraints 7
and 9 reveals that the corresponding states of $t'$ and $t$ do not need to be
completely equal (only those memory locations in $M-A_{i-1}$ resp. $M-W_i$ need
to have the same value). The reason for this is that if a thread would observe
those memory locations it would give rise to a data race. Since we have both
$\sracefree(T' \parallel C)$ and $\sracefree(T \parallel C)$, it follows that the values of the
memory locations $A_{i-1}$ resp. $W_i$ can be arbitrary.
Therefore, $T$ can
make the same state transitions as $T'$. Thus, we can replace the steps of $T'$
in $e'$ by steps of $T$, and get a valid execution $e''$ of $T\!\parallel\!C$
ending in the same state. Therefore, $(s, s') \in \mathbb{M}(T\!\parallel\!C)$.
\end{adjustwidth}

\vspace*{2ex}
\begin{adjustwidth}{5mm}{}
(b) Since $\smatch(t'[0\colon i], t[0\colon i])$, the first $i$ state transitions
of $t$ are the same as those of $t'$. Thus, we can replace the first $i$
execution portions of $T'$ in $e'$ by execution portions of $T$. The last
execution portion of $T$ accesses a memory location $x$ that was not accessed by
the corresponding execution portion of $T'$ (since we have $\exists x \in
A_{i-1} - A_{i-1}'$). Moreover, by $s_{i-1}'(x) \neq s_i'(x)$ it follows that
this memory location is written by the context $C$. Thus, we have
$\srace(T \parallel C)$, which contradicts the premise $\sracefree(T \parallel C)$.
\end{adjustwidth}

\begin{adjustwidth}{0mm}{}
\qed
\end{adjustwidth}
\end{proof}

\section{Supported Optimizations}
\label{sec:optimizations}

We now investigate which optimizations are validated by our theory. By
inspecting the definition of $\textsf{match}()$ we see that it requires
that $t'$ and $t$ perform the same state transitions between lock operations,
and that the sets of memory locations accessed between lock operations of $t'$
must be subsets of the corresponding sets of memory locations accessed by $t$.
Together with the definition of $\textsf{check}()$, this implies that if an
optimization only performs transformations that do not change the state
transitions between lock operations, and does not introduce accesses to new
memory locations, then the optimized thread $T'$ will be a refinement of the
original thread $T$. 

Our theory also allows
the reordering of shared memory accesses into and out of critical sections
(under certain circumstances). The former are called roach motel reorderings and
have been studied for example in the context of the Java memory model (see
e.\,g.~\cite{sevcik:2008}). The latter have not been previously described in the
literature. In analogy to the former we term them \emph{inverse roach motel reorderings}.
We show on an example that both transformations are valid.

%We next show on an example that our theory supports roach motel
%reorderings~\cite{manson:2005}, and a new class of optimizations that we term
%inverse roach motel reorderings. We assume $M = \{x, y\}$.

\subsubsection{Roach motel reorderings}

Consider Figure~\ref{fig:inverse}. Both $x$ and $y$ are shared variables.
Figure~\ref{subfig:originalrm} shows the
original thread $T$, and Figure~\ref{subfig:optimizedrmone} a correctly transformed
version $T'$. The statement \ty{y = 2} has been moved into the critical section. This
is save as it cannot introduce data races (but might remove data
races).

%For example, with a context \ty{lock(l); y = 1; unlock(l);}, $T$ has a
%data race but $T'$ does not have a data race.

\newsavebox{\originalrm}
\begin{lrbox}{\originalrm}
\begin{minipage}{0.25\textwidth}
\begin{lstlisting}[frame=none]
lock(l);
x = 1;
y = 1;
unlock(l);
y = 2;
\end{lstlisting}
\end{minipage}
\end{lrbox}

\newsavebox{\optimizedrmone}
\begin{lrbox}{\optimizedrmone}
\begin{minipage}{0.25\textwidth}
\begin{lstlisting}[frame=none]
lock(l);
x = 1;
y = 1;
y = 2;
unlock(l);
\end{lstlisting}
\end{minipage}
\end{lrbox}

\newsavebox{\optimizedrmtwo}
\begin{lrbox}{\optimizedrmtwo}
\begin{minipage}{0.25\textwidth}
\begin{lstlisting}[frame=none]
lock(l);
x = 1;
unlock(l);
y = 1;
y = 2;
\end{lstlisting}
\end{minipage}
\end{lrbox}

\begin{figure}[t]
\hspace{10ex}
\subfloat[Original\label{subfig:originalrm}]{\usebox{\originalrm}}
\subfloat[Transformed 1\label{subfig:optimizedrmone}]{\usebox{\optimizedrmone}}
\subfloat[Transformed 2\label{subfig:optimizedrmtwo}]{\usebox{\optimizedrmtwo}}
\caption{Original, roach motel reordering, inverse roach motel reordering}
\label{fig:inverse}
\end{figure}

Let $t'$ be a state trace of $T'$ starting in some initial state $s_{init}$.
Then there is a state trace $t$ of $T$ starting also in $s_{init}$. The state
$s_{init}$ corresponds to the state at the first lock($l$) for both threads. At
the unlock($l$) they are in states $s' = \{x = 1, y = 2\}$ resp. $s = \{x = 1,
y = 1\}$. The access sets of the two state traces are $R_0' = R_1' = R_0 = R_1 =
\{\}$ (we ignore the read sets in the following as they are empty), and $W_0' =
W_0 = \{x, y\}, W_1' = \{\}, W_1 = \{y\}$.
At the unlock($l$), according to the definition of $\textsf{match}()$, the
constraint $\forall x \in M - W_1\colon s'(x) = s(x)$ needs to be satisfied.
This is the case as the variable $y$ for which $s'$ and $s$ differ is in $W_1$.
Moreover, for $\textsf{match}()$ to be satisfied, for the write sets the
following must hold: $W_0' \subseteq W_0 \cup W_1$ and $W_1' \subseteq W_1$.
This also holds. Hence, $\textsf{match}(t', t)$ holds. Consequently, we also
have $\textsf{check}(T', T)$ which implies $\textsf{ref}(T', T)$ according to
Theorem~\ref{thm:soundness}. $T'$ is thus a correctly transformed version of $T$.

\subsubsection{Inverse roach motel reorderings}

Consider now the example in Figure~\ref{subfig:optimizedrmtwo} which again shows
a correctly optimized version $T''$ of the thread $T$. In order to get defined
behavior of $T\!\parallel\!C$, the context $C$ must in particular avoid data races
with ${\tt y = 2}$. But this implies that the context cannot observe the write
${\tt y = 1}$, for if it could, there would be a data race with ${\tt y = 2}$.
Moreover, moving ${\tt y = 1}$ downwards out of the critical section cannot
introduce data races, as a write to $y$ already occurs in this section.
Consequently, ${\tt y = 1}$ can be moved downwards out of the critical section
(or in this particular case removed completely).

We can use a similar argument as in the previous section to show within our
theory that $T''$ is a
correctly optimized version of $T$. Let $t'$, $t$ be again two state traces starting in the same
initial state $s_{init}$. At the unlock($l$) they are in states $s' = \{x = 1,
y = y_{init}\}$ resp. $s = \{x = 1, y = 1\}$, with $y_{init}$ denoting the value
of $y$ in $s_{init}$. Again the constraints $\forall x \in M - W_1\colon s''(x)
= s(x)$, and $W_0'' \subseteq W_0 \cup W_1$ and $W_1'' \subseteq W_1$ are
satisfied, and we can conclude that $\textsf{match}(t', t)$, $\textsf{check}(T',
T)$, and finally $\textsf{ref}(T', T)$.

%%%%%%%%%%%%%%%%%%%%%%%%%%%%%%%%%%%%%%%%%%%%%%%%%%%%%%%%%%%%%%%%%%%%%%%%%%%%%%%%

\section{Formalization with Nested Locks}
\label{sec:nested_locks}

We now adapt the formalization from Section~\ref{sec:formalization} to also allow nested
locks. To that end, we define a new coarse-grained transition relation
$\rightarrow_n$, the transition trace set $\mathbb{S}_n(T)$ of a thread, the
$\textsf{match}_n$ predicate, and finally the $\textsf{check}_n(T', T)$ predicate.

We first introduce some additional notation. We use $\ssync(T, h)$
($\Leftrightarrow \slock(T, h) \vee \sunlock(T, h)$) to indicate that the next
operation to be executed by thread $T$ is a lock operation or and unlock
operation. The function $\stype(e, i)$ returns the kind of step $i$ of execution
fragment $e$. This is one of $\textsf{lock}$, $\textsf{unlock}$, $\textsf{rd}$,
or $\textsf{wr}$. The predicate $\ssync(e, i)$ indicates that step $i$ of
execution fragment $e$ corresponds to a lock or unlock operation.

We now define the new coarse-grained transition relation $\rightarrow_n$. It
embodies execution from a lock operation to the next lock operation. Formally:

%\begin{definition}[$\rightarrow^*$]
%$(P, h, s) \xrightarrow{l, (R_a, W_a), (R_b, W_b)}^* (P, h', s')$ if and only if
%there exists an execution fragment $e = (h_0, s_0)(h_1, s_1),\allowbreak \ldots,
%\allowbreak (h_k, s_k),\allowbreak \ldots, (h_n, s_n)$ such that ${\up\sth}(0) = {\up\sth}(1)
%= \ldots = {\up\sth}(n-1) = T$ for some thread $T$ of $P$, ${\up\slock}(0)$,
%${\up\smem(1)}, \ldots, {\up\smem(k-1)},{\up\sunlock}(k)$, ${\up\smem(k+1)}, \ldots, {\up\smem(n-1)}$,
%either ${\up\slock}(T,\allowbreak h_n)$ or ${\up\sterm}(T, h_n)$, $\textsf{loc}(0) = l$, $h_0 = h$ and $h_n = h'$.
%The set $R_a$ (resp. $W_a$) is the set of memory locations read (resp. written)
%by steps $1$ to $k-1$. The set $R_b$ (resp. $W_b$) is the set of memory
%locations read (resp. written) by steps $k+1$ to $n-1$.
%\end{definition}

%(P, h, s) --l,t,R,W-> (P, h', s') iff

%there is an execution fragment e = (h0, s0)...(hn, sn) such that th(0) = ... =
%th(n-1) = T for some thread T, sync(0), mem(1), ..., mem(n-1), sync(T, hn) or
%term(T, hn), loc(0) = l, type(0) = t, h0 = h and hn = h'. The set R (resp. W) is
%the set of memory locations read (resp.written) by steps 0 to n-1.

\begin{definition}[$\rightarrow_n$]
$(P, h, s) \xrightarrow{l, t, R, W}_n (P, h', s')$ if and only if there exists an
execution fragment $e = (h_0, s_0)(h_1, s_1)\ldots(h_n, s_n)$ such that
${\up\sth}(0) = {\up\sth}(1) = \ldots = {\up\sth}(n-1) = T$ for some thread $T$
of $P$, ${\up\ssync}(0)$, ${\up\smem}(1)$, $\ldots$, ${\up\smem}(n-1)$, ${\up\ssync}(T, h_n)$
or ${\up\sterm}(T, h_n)$, ${\up\sloc}(0) = l$, ${\up\stype}(0) = t$, $h_0 = h$ and
$h_n = h'$. The set $R$ (resp. $W$) is the set of memory locations read (resp.
written) by steps $1$ to $n-1$.
\end{definition}

%%%%%%%%%%%%%%%%%%%%%%%%%%%%%%%%%%%%%%%%%%%%%%%%%%%%%%%%%%%%%%%%%%%%%%%%%%%%%%%%

The set $\mathbb{S}_n(T)$ denotes the \emph{transition trace set} of a thread
$T$. A transition trace has the form $(l_0, t_0, s_0, R_0, W_0, s_0^*)\allowbreak(l_1, t_1, s_1, R_1, W_1, s_1^*)\allowbreak\ldots\allowbreak(l_n, t_n,\allowbreak s_n,\allowbreak R_n,\allowbreak W_n,\allowbreak s_n^*)$. Each tuple corresponds
to a transition from a synchronization operation to immediately before the next
synchronization operation. The first component of a tuple denotes the lock
operated on, the second component denotes the type of the operation (either
\textsf{lock} or \textsf{unlock}), the third component denotes the starting
state, the fourth and fifth components denote the sets of memory locations read
or written, and the sixth component denotes the target state of the transition.

In Figure~\ref{fig:nextandprev} we define two predicates on transition traces.
Given a transition trace $t$ and an index $i$, they return the index of the
next transition that starts at a lock($l$), or the most recent transition
that started in an unlock($l$).

%%%%%%%%%%%%%%%%%%%%%%%%%%%%%%%%%%%%%%%%%%%%%%%%%%%%%%%%%%%%%%%%%%%%%%%%%%%%%%%%

\begin{figure}
\normalsize
\begin{minipage}{0.5\textwidth}%
\begin{flalign*}%
&{\up\textsf{next-lock}}(t, i) = j \Leftrightarrow\\[1ex]
&\mi i < j\\
&\mi t_j = {\up\textsf{lock}}\\
&\mi \forall\, i \le k \le j\colon t_k = {\up\textsf{unlock}}\\
\end{flalign*}%
\end{minipage}%
\begin{minipage}{0.5\textwidth}%
\vspace*{-3.6ex}
\begin{flalign*}%
&{\up\textsf{prev-unlock}}(t, i) = j \Leftrightarrow\\[1ex]
&\mi j < i\\
&\mi t_j = {\up\textsf{unlock}}\\
&\mi \forall\, j \le k \le i\colon t_k = {\up\textsf{lock}}
\end{flalign*}%
\end{minipage}%
\caption{Next lock and previous unlock}
\label{fig:nextandprev}
\end{figure}

%%%%%%%%%%%%%%%%%%%%%%%%%%%%%%%%%%%%%%%%%%%%%%%%%%%%%%%%%%%%%%%%%%%%%%%%%%%%%%%%

Figure~\ref{fig:state_tracesn} shows the transition trace set of a thread $T$.
Line~7 specifies that the state does not change at unlock operations, and lines
8-12 restrict how the state may change at lock operations.

%%%%%%%%%%%%%%%%%%%%%%%%%%%%%%%%%%%%%%%%%%%%%%%%%%%%%%%%%%%%%%%%%%%%%%%%%%%%%%%%

%S(T) =
%{
%  (l0, t0, s0, R0, W0, s0*)(l1, t1, s1, R1, W1, s1*)(l2, t2, s2, R2, W2, s2*)...
%  (ln, tn, sn, Rn, Wn, sn*) |

%  E h0, ..., hn, hn+1:
%  (h0, s0) --l0,t0,R0,W0-> (h1, s0*) &
%  (h1, s1) --l1,t1,R1,W1-> (h2, s1*) &
%  ... &
%  (hn, sn) --ln,tn,Rn,Wn-> (hn+1, sn*) &

%  h0 = hs &

%  F 0 <= i <= n: (li == unlock) => si == s(i-1)* &

%  F 0 <= i <= n:
%    li == lock =>
%    let j = prev_unlock(i) in
%    F x in (M - (Aj u A(j+1) u ... u A(i-1))):
%      si(x) == s(i-1)*(x)
%}

%%% S(T)
\begin{figure}[t]
\normalsize
\begin{flalign*}
\mathbb{S}_n(T)\!=\!\{ & (l_0, t_0, s_0, R_0, W_0, s_0^*)(l_1, t_1, s_1, R_1, W_1, s_1^*)\ldots(l_n, t_n, s_n, R_n, W_n, s_n^*)\,|\\[1.5ex]
&\mi \text{\scriptsize 1}~~~~\exists h_0, \ldots, h_{n+1}\colon\\
&\mi \text{\scriptsize 2}~~~~\mi (T, h_0, s_0) \xrightarrow{l_0, t_0, R_0, W_0} (h_1, s_0^*)~\wedge\\
&\mi \text{\scriptsize 3}~~~~\mi (T, h_1, s_1) \xrightarrow{l_1, t_1, R_1, W_1} (h_2, s_1^*)~\wedge\\
&\mi \text{\scriptsize 4}~~~~\mi \ldots\\
&\mi \text{\scriptsize 5}~~~~\mi (T, h_n, s_n) \xrightarrow{l_n, t_n, R_n, W_n} (T, h_{n+1}, s_n^*)~\wedge\\
&\mi \text{\scriptsize 6}~~~~\mi h_0 = h_s~\wedge\\
&\mi \text{\scriptsize 7}~~~~\forall\, 0 < i \le n\colon (l_i = \textsf{unlock} \Rightarrow s_i = s_{i-1}^*)~\wedge\\
&\mi \text{\scriptsize 8}~~~~\forall\, 0 < i \le n\colon\\
&\mi \text{\scriptsize 9}~~~~\mi l_i = \textsf{lock} \Rightarrow\\
&\mi \text{\scriptsize 10}~~~~\mii \textbf{let}~j = \textsf{prev-unlock}(\text{this}, i)~\textbf{in}\\
&\mi \text{\scriptsize 11}~~~~\mii \forall x \in (M - (A_j \cup A_{j+1} \cup \ldots \cup A_{i-1}))\colon\\
&\mi \text{\scriptsize 12}~~~~\miii s_i(x) = s_{i-1}^*(x)\\
&\}
\end{flalign*}
\caption{Definition of the transition trace set of a thread}
\label{fig:state_tracesn}
\end{figure}

%%%%%%%%%%%%%%%%%%%%%%%%%%%%%%%%%%%%%%%%%%%%%%%%%%%%%%%%%%%%%%%%%%%%%%%%%%%%%%%%

The $\textsf{match}_n(t', t)$ predicate between transition traces $t', t$ is
shown in Figure~\ref{fig:matchn}. The constraints correspond to those of the
previous $\textsf{match}(t', t)$ predicate for the case without nested locks.

%%%%%%%%%%%%%%%%%%%%%%%%%%%%%%%%%%%%%%%%%%%%%%%%%%%%%%%%%%%%%%%%%%%%%%%%%%%%%%%%

%match(u', u) <=>

%# same lock constraints
%F 0 <= i <= n:
%  li' == li & ti' == ti

%# race constraints
%F 0 <= i <= n:
%  let l = prev_unlock(i) in
%  let j = next_lock(i) in
%  Wi' <= (Wl u ... u Wi u ... u W(j-1))
%  Ri' <= (Al u ... u Ai u ... u A(j-1))

%# state at locks constraints
%F 0 <= i <= n:
%  li == lock =>
%    let j = prev_unlock(i) in
%    F x in (M - (Aj u A(j+1) u ... u A(i-1))):
%      si(x) = si'(x)
%    F x in (Aj u A(j+1) u ... u A(i-1)) - (Aj' u A(j+1)' u ... u A(i-1)'): 
%      si'(x) == s(i-1)'*(x)

%# state at unlocks constraints
%F 0 <= i <= n:
%  li == unlock =>
%  let j = next_lock(i) in
%  F x in (M - (Wi u W(i+1) u ... u W(j-1))):
%    si(x) = si'(x)

%%% match(t', t)
\begin{figure}[t]
\normalsize
\begin{flalign*}
&{\up\smatch}_n(t', t) \Leftrightarrow\\[1ex]
&\mi \text{\scriptsize 1} \mi {|t'| = |t|}\\
&\mi \text{\scriptsize 2}\mi {\textbf{let}~n = |t|~\textbf{in}}\\[1ex]
&\mi {\tt\#~same~locks~constraint}\\
&\mi \text{\scriptsize 3}\mi \forall\,0 \le i \le n\colon l_i' = l_i \wedge t_i' = t_i\\[1ex]
&\mi {\tt\#~race~constraints}\\
&\mi \text{\scriptsize 4}\mi \forall\,0 \le i \le n\colon\\
&\mi \text{\scriptsize 5}\mii \textbf{let}~l = \textsf{prev-unlock}(i)~\textbf{in}\\
&\mi \text{\scriptsize 6}\mii \textbf{let}~j = \textsf{next-lock}(i)~\textbf{in}\\
&\mi \text{\scriptsize 7}\mii W_i' \subseteq (W_l \cup \ldots \cup W_i \cup \ldots \cup W_{j-1})\\
&\mi \text{\scriptsize 8}\mii R_i' \subseteq (A_l \cup \ldots \cup A_i \cup \ldots \cup A_{j-1}))\\[1ex]
&\mi {\tt\#~state~at~locks~constraints}\\
&\mi \text{\scriptsize 9}\mi \forall\,0 \le i \le n\colon\\
&\mi \text{\scriptsize 10}\mii l_i = \textsf{lock} \Rightarrow\\
&\mi \text{\scriptsize 11}\miii \textbf{let}~j = \textsf{prev-unlock}(i)~\textbf{in}\\
&\mi \text{\scriptsize 12}\miii \forall\,x \in (M- (A_j \cup A_{j+1} \cup \ldots \cup A_{i-1}))\colon\\
&\mi \text{\scriptsize 13}\miiii s_i(x) = s_i'(x)\\
&\mi \text{\scriptsize 14}\miii \forall\,x \in (A_j \cup A_{j+1} \cup \ldots \cup A_{i-1}) - (A_j' \cup A_{j+1}' \cup \ldots \cup A_{i-1}')\colon\\
&\mi \text{\scriptsize 15}\miiii s_i'(x) = s_{i-1}'^*(x)\\[1ex]
&\mi {\tt\#~state~at~unlocks}\\
&\mi \text{\scriptsize 16}\mi \forall\,0 \le i \le n\colon\\
&\mi \text{\scriptsize 17}\mii l_i = \textsf{unlock} \Rightarrow\\
&\mi \text{\scriptsize 18}\miii \textbf{let}~j = \textsf{next-lock}(i)~\textbf{in}\\
&\mi \text{\scriptsize 19}\miii \forall x \in (M - (W_i \cup W_{i+1} \cup \ldots \cup W_{j-1}))\colon\\
&\mi \text{\scriptsize 20}\miiii s_i(x) = s_i'(x)
\end{flalign*}
\vspace*{-2ex}
\caption{Definition of matching transition traces}
\label{fig:matchn}
\end{figure}

%%%%%%%%%%%%%%%%%%%%%%%%%%%%%%%%%%%%%%%%%%%%%%%%%%%%%%%%%%%%%%%%%%%%%%%%%%%%%%%%

Finally, in Figure~\ref{fig:checkn} we define the $\textsf{check}_n(T', T)$
predicate. It implies $\sref(T', T)$ also in the case when $T'$ and $T$ contain
nested locks. 

%%%%%%%%%%%%%%%%%%%%%%%%%%%%%%%%%%%%%%%%%%%%%%%%%%%%%%%%%%%%%%%%%%%%%%%%%%%%%%%%

%check(T', T) <=>
%  F u' in S(T'): E u in S(T):
%    match(u', u) v
%    E 0 <= i <= n:
%      ti == lock &
%      match(u'[0:i], u[0:i]) &
%      let j = prev_unlock(i) in
%      E x in (Aj u A(j+1) u ... u A(i-1)) - (Aj' u A(j+1)' u ... u A(i-1)'):
%        s(i-1)'*(x) != si'(x)

%\begin{definition}[check]
\begin{figure}
\normalsize
\begin{flalign*}
&{\up\scheck}_n(T', T) \Leftrightarrow\\
&\mi \forall t' \in \mathbb{S}_n(T')\colon \exists t \in \mathbb{S}_n(T)\colon\\
&\mii {\up\smatch}(t', t) \vee\\
&\mii \exists\,0 \le i \le n\colon\\
&\miii t_i = \textsf{lock}~\wedge\\
&\miii {\up\smatch}(t'[0\colon\!i], t[0\colon\!i])\wedge\\
&\miii \textbf{let}~j = {\up\textsf{prev-unlock}}(i)~\textbf{in}\\
&\miii \exists x \in (A_j \cup A_{j+1} \cup \ldots \cup A_{i-1}) - (A_j' \cup A_{j+1}' \cup \ldots \cup A_{i-1}')\colon\\
&\miiii s_{i-1}'^* \neq s_i'(x)
\end{flalign*}
\caption{Check for threads with nested locks}
\label{fig:checkn}
\end{figure}

\section{Evaluation}
\label{sec:evaluation}

Previously we have argued that our specification efficiently captures thread
refinement in the SC-for-DRF execution model, as it abstracts over the way in
which a thread implements the state transitions between lock
operations. In this section we provide experimental evidence, showing that the
application of our state-based theory in a compiler testing setting leads to
large performance improvements compared to using an event-based theory.

Eide and Regehr~\cite{eide:2008} pioneered an approach to test that a
compiler correctly optimizes programs that involves repeatedly (1) generating a
random C program, (2) compiling it both with and without optimizations (e.\,g.
\ty{gcc -O0} and \ty{gcc -O3}), (3) collecting a trace from both the original
and the optimized program, and (4) checking whether the traces match. If two
traces do not match a compiler bug has been found. Morisset et al.~\cite{morisset:2013}
extended this approach to a fragment of C11 and implemented it in their
\textsf{cmmtest} tool.

The \textsf{cmmtest} tool consists of the following components: an adapted version of \textsf{csmith}~\cite{yang:2011} (we
call it ``\textsf{csmith-sync}'' in the following) to generate random C threads, a tool to collect execution traces of a thread
(``\textsf{pin-interceptor}''), and a tool to check whether two given traces match
(``\textsf{cmmtest-check}''). The \textsf{csmith-sync} tool generates random C threads
with synchronization operations such as
\ty{pthread\_mutex\_lock()}, \ty{pthread\_mutex\_unlock()}, or the C11 primitives
\ty{release()}
and \ty{acquire()}. We only consider programs containing lock operations. The
\textsf{pin-interceptor} tool is based on the Pin binary instrumentation framework~\cite{luk:2005}. It
executes a program and instruments the memory accesses and synchronization
operations in order to collect a trace of those operations. The \textsf{cmmtest-check}
tool takes two traces (produced by \textsf{pin-interceptor}) of an optimized and an
unoptimized thread, and checks whether the traces match.

\subsection{Implementation}
\label{subsec:implementation}

We use the existing \textsf{csmith-sync} and \textsf{pin-interceptor} tools, and implemented our
own trace checker \textsf{tracecheck}. It takes two traces (such as those
depicted in Figure~\ref{subfig:statebased}), and first determines the states of the traces at
lock operations, and the sets of memory locations accessed between
lock operations. That is, for a trace it constructs its
corresponding
state trace (i.\,e. an element of $\mathbb{S}(P)$). Then, it checks
whether the two state traces match by implementing the $\textsf{match}()$ predicate. This
way of checking traces is very efficient as it has runtime \emph{linear} in the
length of the traces.

This can be seen as follows. The size of a state is bounded by the number of
writes that have occured so far. Moreover, at each lock operation not the
complete states have to be checked for equality, but only the memory locations
that have been written to since the last check at the previous lock operation.
Thus, checking the states at lock operations (corresponding to the
``states at lock'' and ``states at unlock'' constraints of the $\smatch()$
predicate) is a linear operation.

The race constraints can also be checked in linear time. First, the size of the
sets is bounded by the number of memory locations accessed between the two
corresponding lock operations. Second, subset checking between two
sets $A$ and $B$ can be implemented in linear time. If $A$ and $B$ are represented as
hash sets, then $A \subseteq B$ can be checked by iterating over the elements of $A$, and
for each one performing a lookup in $B$ (which has constant time). If all
elements
are found, $A$ is a subset of $B$. In summary, we have a linear procedure for
checking whether two traces match.

\subsection{Experiments}
\label{subsec:experiments}

We evaluated \textsf{tracecheck} on in total $40,000$ randomly generated C
threads. We compiled each with \ty{gcc -O0} and \ty{gcc -O3} and collected a
trace from each. The length of the traces was in the range of 1 to 4,000
events. Our tool outperformed \textsf{cmmtest-check} on all traces. On average,
\textsf{tracecheck} was 210\,X faster.

%We compiled each randomly generated program with \ty{gcc -O0} and
%\ty{gcc -O3}. We then collected a trace both from the original program and
%from the optimized program using \textsf{pin-interceptor}. Then we measure the
%time it takes to match the original and the optimized traces.

Figure~\ref{fig:lentime} shows the average time it took to match two traces of a
certain length, for \textsf{cmmtest-check} (Figure~\ref{subfig:lentimecmmtest})
and \textsf{tracecheck} (Figure~\ref{subfig:lentimetc}). Along the x axis, we
classify the pairs of traces $t'$, $t$ into bins according to the length of the
unoptimized trace $t$. Each bin $i$ contains $100$ pairs $t'$, $t$ such that the
length of $t$ is in the range $[250\cdot i, 250 \cdot (i+1)]$. For example, bin $5$
contains the pairs with the length of the unoptimized trace being in the range
$[1250, 1500]$. The y axis shows the average time it took to match two traces
$t'$, $t$ in the respective bin. The dotted lines represent the 20th and 80th
percentile to indicate the spread of the times.

%As can be seen in Figure~\ref{subfig:lentimecmmtest}, the matching time for
%\textsf{cmmtest-check} increases disproportionally as the traces become longer.
%In contrast, the matching time for \textsf{tracecheck} is linear. Moreover, the
%variance in the matching time increases for \textsf{cmmtest-check} for longer
%traces.

Figure~\ref{fig:lockstime} shows the effect of the number of lock operations in
the two traces on the time it takes to check if they match. We have evaluated
this on pairs of traces $t'$, $t$ with the unoptimized trace $t$ having length
in the range of $[1900, 2100]$. Along the x axis, we classify the pairs of
traces $t'$, $t$ into bins according to the number of
lock operations they contain. The y axis again indicates the average matching
time.
As can be seen in Figure~\ref{subfig:lockstimecmmtest}, \textsf{cmmtest-check}
is sensitive to the number of locks in a trace. That is, matching traces
generally takes longer the fewer locks they contain. The reason for this is that
\textsf{cmmtest-check} considers lock operations as ``barriers'' against
transformations: it does not try to reorder events across lock operations. Thus,
the more lock operations there are in a trace, the fewer potential
transformations it tries, and thus the lower the checking time.
Our tool \textsf{tracecheck} on the other hand is largely insensitive to the
number of locks in a trace.
%There are not any significant differences in the
%checking time depending on the number of locks in a trace.

%The testing did not reveal any new bugs in version 4.71 of gcc (i.e. the traces
%could always be matched successfully).

%%%%%%%%%%%%%%%%%%%%%%%%%%%%%%%%%%%%%%%%%%%%%%%%%%%%%%%%%%%%%%%%%%%%%%%%%%%%%%%%

\pgfplotsset{axisstyle1/.style={
  width=0.47\linewidth,
  height=0.27\textheight,
  ylabel={average checking time (in s)},
  xlabel={trace length (number of events)}
}}

\pgfplotsset{plotstyle1/.style={
%  const plot,
  jump mark right,
  thick
}}

\pgfplotsset{varstyle1/.style={
  densely dotted
}}

\newsavebox{\lentimecmmtest}
\begin{lrbox}{\lentimecmmtest}
\begin{tikzpicture}[scale=1]
\small
\selectcolormodel{gray}
\begin{axis}[axisstyle1]
\addplot[plotstyle1, varstyle1] table[col sep=comma] {cmmtest_len-time-lower-cut.csv};
\addplot[plotstyle1]            table[col sep=comma] {cmmtest_len-time-cut.csv};
\addplot[plotstyle1, varstyle1] table[col sep=comma] {cmmtest_len-time-upper-cut.csv};
\end{axis}
\end{tikzpicture}
\end{lrbox}

\newsavebox{\lentimetc}
\begin{lrbox}{\lentimetc}
\begin{tikzpicture}[scale=1]
\small
\selectcolormodel{gray}
\begin{axis}[axisstyle1,
    y tick label style={
        /pgf/number format/.cd,
        fixed,
        fixed zerofill,
        precision=2,
        /tikz/.cd
    },
    scaled y ticks=false
]
\addplot[plotstyle1, varstyle1] table[col sep=comma] {tc_len-time-lower-cut.csv};
\addplot[plotstyle1]            table[col sep=comma] {tc_len-time-cut.csv};
\addplot[plotstyle1, varstyle1] table[col sep=comma] {tc_len-time-upper-cut.csv};
\end{axis}
\end{tikzpicture}
\end{lrbox}

\begin{figure}[t]
\hspace*{-2mm}
\subfloat[cmmtest\label{subfig:lentimecmmtest}]{\usebox{\lentimecmmtest}}%
\hspace*{3mm}
\subfloat[tracecheck\label{subfig:lentimetc}]{\usebox{\lentimetc}}
\caption{Average checking time over length of traces}
\label{fig:lentime}
\end{figure}

%%%%%%

\pgfplotsset{axisstyle2/.style={
  width=0.47\linewidth,
  height=0.27\textheight,
  ylabel={average checking time (in s)},
  xlabel={Number of lock events}
}}

\pgfplotsset{plotstyle2/.style={
  jump mark right,
  thick
}}

\pgfplotsset{varstyle2/.style={
  densely dotted
}}

\newsavebox{\lockstimecmmtest}
\begin{lrbox}{\lockstimecmmtest}
\begin{tikzpicture}[scale=1]
\small
\selectcolormodel{gray}
\begin{axis}[axisstyle2]
\addplot[plotstyle2, varstyle2] table[col sep=comma] {cmmtest_locks-time-fixed-lower-cut.csv};
\addplot[plotstyle2]            table[col sep=comma] {cmmtest_locks-time-fixed-cut.csv};
\addplot[plotstyle2, varstyle2] table[col sep=comma] {cmmtest_locks-time-fixed-upper-cut.csv};
\end{axis}
\end{tikzpicture}
\end{lrbox}

\newsavebox{\lockstimetc}
\begin{lrbox}{\lockstimetc}
\begin{tikzpicture}[scale=1]
\small
\selectcolormodel{gray}
\begin{axis}[axisstyle2,
    y tick label style={
        /pgf/number format/.cd,
        fixed,
        fixed zerofill,
        precision=2,
        /tikz/.cd
    },
    scaled y ticks=false
]
\addplot[plotstyle2, varstyle2] table[col sep=comma] {tc_locks-time-fixed-lower-cut.csv};
\addplot[plotstyle2]            table[col sep=comma] {tc_locks-time-fixed-cut.csv};
\addplot[plotstyle2, varstyle2] table[col sep=comma] {tc_locks-time-fixed-upper-cut.csv};
\end{axis}
\end{tikzpicture}
\end{lrbox}

\begin{figure}[t]
\hspace*{-2mm}
\subfloat[cmmtest\label{subfig:lockstimecmmtest}]{\usebox{\lockstimecmmtest}}%
\hspace*{3mm}
\subfloat[tracecheck\label{subfig:lockstimetc}]{\usebox{\lockstimetc}}
\caption{Average checking time over number of locks in a trace}
\label{fig:lockstime}
\end{figure}

%%%%%%%%%%%%%%%%%%%%%%%%%%%%%%%%%%%%%%%%%%%%%%%%%%%%%%%%%%%%%%%%%%%%%%%%%%%%%%%%

\section{Related work}
\label{sec:related}

Refinement approaches can be classified based on whether they handle
language-level memory models (such as SC-for-DRF or C11)~\cite{morisset:2013,%
boehm:2007,sevcik:2011,manson:2005,sevcik:2008}, hardware
memory models (such as TSO)~\cite{jagadeesan:2012,sevcik:2013}, or idealized
models (typically SC)~\cite{brookes:1993,lochbihler:2010}.

The approaches for language-level models typically describe refinement by
giving
valid transformations on thread execution traces. These trace transformations
are then lifted to the program code level. An example is the theory of valid
optimizations of Morisset et al.~\cite{morisset:2013}. They handle the fragment
of C11 with lock/unlock and release/acquire operations. The theory is
relatively
restrictive in that they do not allow the reordering of memory accesses across
synchronization operations (such as the roach model reorderings described in
Section~\ref{sec:optimizations}).

The approaches of Brookes~\cite{brookes:1993} (for SC) and
Jagadeesan~\cite{jagadeesan:2012} (for TSO) are closer to ours in that they
also
specify refinement in terms of state transitions rather than transformations on
traces. They provide a sound and complete denotational specification of
refinement. However, their completeness proofs rely on the addition of an
unrealistic await() statement which provides strong atomicity.

Liang et al.~\cite{liang:2012} presented a rely-guarantee-based approach to
reason about thread refinement. Starting from the assumption of arbitrary
concurrent contexts, they allow to add constraints that capture knowledge about
the context in which the threads run in. They later extended their approach to
also allow reasoning about whether the original and the refined thread exhibit
the same termination behavior~\cite{liang:2014}.

Lochbihler~\cite{lochbihler:2010} provides a verified non-optimizing compiler
for concurrent Java guaranteeing refinement between the threads in the source
program and the bytecode. It is however based on SC semantics rather than the
Java memory model. Sevcik et al.~\cite{sevcik:2013} developed the verified
CompCertTSO compiler for compilation from a C-like language with TSO semantics
to x86 assembly.

The compiler testing method based on checking traces of randomly generated
programs on which we evaluated our refinement specification in Section~\ref{sec:evaluation}
was pioneered by Eide and Regehr~\cite{eide:2008}. They used this approach to
check the correct compilation of volatile variables. It was extended to a fragment of C11
by Morisset et al.~\cite{morisset:2013}.

\section{Conclusions}
\label{sec:conclusions}

We have presented a new theory of thread refinement for the SC-for-DRF execution
model. The theory is based on matching the state of the transformed and the
original thread at lock operations, and ensuring that the former does
not introduce data races that were not possible with the latter. Our theory is
more precise than previous ones in that it allows to show refinement in cases
where others fail. Moreover, it supports efficient reasoning about refinement.

%and supports efficient reasoning about refinement.

\newpage

%in that it allows to show refinement in cases
%where others fail, .

%For future work, it remains to investigate the completeness of our
%specification. Also it would be interesting to explore whether we can support
%additional synchronization primitives such as signal()/wait(). Finally, we plan
%rather than checking individual traces we could also use the theory for proving
%that a thread is a correctly optimized version of an original thread, with a
%potential application to compiler regression testing, where it is of interest to
%explore all potential executions of a thread rather than exploring a single
%execution as in random compiler testing.

%\appendix

%\section{Full Model}

%\bibliographystyle{plain}
\bibliographystyle{abbrv}
\bibliography{references}

\end{document}